\newtheorem{theorem}{Theorem}[section]
\newtheorem{proposition}[theorem]{Proposition}
\newtheorem{definition}[theorem]{Definition}
\newtheorem{example}[theorem]{Example}
\newtheorem{result}[theorem]{Result}
\newtheorem{remark}[theorem]{Remark}
\title{Quasi-cyclic Linear Error-Block Code-based Post-quantum Signature
\thanks{\textit{This material is based upon work supported by Higher Education Authority (HEA), Ireland. \\
\copyright 2025 IEEE. Personal use of this material is permitted. Permission from IEEE must be obtained for all other uses, in any current or future media, including reprinting/republishing this material for advertising or promotional purposes, creating new collective works, for resale or redistribution to servers or lists, or reuse of any copyrighted component of this work in other works.}} 
}
\author{I. Cherkaoui, S. Belabssir, J. Horgan and I. Dey}
\begin{document}
\maketitle

\begin{abstract}
Shor’s quantum algorithm led to the discovery of multiple vulnerabilities in a number of cryptosystems. As a result, post-quantum cryptography attempts to provide cryptographic solutions that can face these attacks, ensuring the security of sensitive data in a future where quantum computers are assumed to exist. Error correcting codes are a source for efficiency when it comes to signatures, especially random ones described in this paper, being quantum-resistant and reaching the Gilbert-Varshamov bound, thus offering a good trade-off between rate and distance. In the light of this discussion, we introduce a signature based on a family of linear error-block codes (LEB), with strong algebraic properties: it is the family of quasi-cyclic LEB codes that we do define algebraically during this work.
\end{abstract}

\keywords{Cryptography \and Cyclic codes \and Quasi-cyclic codes \and Linear-error block codes \and Post-quantum signature.}

\section{Introduction}
\label{sec:introduction}
Nowadays, much cryptography relies on error-correcting codes and lattices. In particular, digital signature algorithms attract the attention of several researchers when compared with other schemes of cryptography. This is not only due to the interesting algebraic and arithmetic properties of linear error-correcting codes, but also since Shor's theorem \cite{Shor97} showed that with the advent of quantum computers, cryptosystems based on some theoretical theorems can be broken in polynomial time. This makes it very important to have proper, provably secure post-quantum signature schemes.\\

A great number of schemes based on the theory of coding have been advanced over the years. However, a lot of them were broken, while the secure ones suffered either from huge public keys, large signatures, or slow signing algorithms.\\
The first paper describing proposed a code-based signature scheme in the Lee metric was Fuleeca \cite{fuleeca} with an aim to providing resistance against quantum attack while at the same time offering small key/signature sizes. Unlike conventional code-based schemes (for instance, CFS, Wave) that mention Hamming or rank metrics, Fuleeca works with the Lee metric. The Lee metric is Manhattan-like distance over integers modulo p and complicates attacks that rely on sparsity or low supports because low Lee-weight vectors may correspond to large Hamming weights. Also underpinning security claims Fuleeca made was the NP-hard decoding problem for the Lee metric \cite{weger}. It implemented the hash-and-sign paradigm, using quasi-cyclic codes. The secret key was a generator matrix with circulant blocks, while the public key was its systematic form. Signatures were codewords with a controlled Lee weight such that a large sign match with the hashed message binds signatures to messages. Fuleeca has claimed 1,318-byte public keys and 1,100-byte signatures with NIST Level I being their performance criterion, which translates to total communication costs (public key + signature) lower than those for Dilithium and SPHINCS+. 

Rejecting sampling, as were Falcon/Wave, was avoided in the designs, and limits were put in place by use of a heuristic concentration step to further reduce signature leakage. In doing so, the following paper \cite{fuleakage}indicated three fatal attacks that bring down Fuleeca's back-security claims to untenable levels. Leaked-Sublattice Attack: The signatures lie in a low-dimensional sublattice spanned by the generator matrix rather than in the full code lattice so an attacker could recover secret vectors via BKZ lattice reduction, having reconstructed the sublattice from almost 100 signatures. This reduced security for Fuleeca-I/III/V from 160/224/288 bits down to 111/155/199 bits, respectively. Worst-Case Lattice Attack (Quantum): The generator matrix having a circulant structure is mapped to an ideal lattice. Quantum algorithms \cite{biasse} will recover short generators of principal ideals in polynomial time. In quantum polynomial time with only a few signatures, full key recovery is possible. The learning attack: The concentration step put a bias into signature coefficients, leaking the secret-key structure. By averaging the outer products of the signatures, the attackers constructed an approximation of the secret circulant vector and refined it with rounding. Full key recovery with 90,000-175,000 signatures (practical for NIST Level I/V).\\

Thus, to mitigate the following issues due to heuristic defenses the choice of our proposed signature is based on LEB quasi-cyclic codes and a $\pi$ metric allowing the use of their algebraic properties and makes the scheme viable in practice.\\

Linear error-block codes or LEB codes for short, were introduced by Feng, Xu and Hickernell in \cite{FXH06} to be a generalization of classical linear codes. Feng \textit{ et al.} \cite{FXH06} have claimed that LEB codes have applications in experimental design since they yield mixed-level orthogonal arrays, and in high-dimensional numerical integration. In  order  to  allow  their application  in  cryptography,  especially  in  a  McEliece-like  cryptosystem,  Dariti \textit{et al.} \cite{DSa12b} presented  a  method  for  decoding  linear  error-block  codes  inspired  from  the  standard  array  classical   method. The same authors presented in \cite{Dariti12} some solutions on the use of LEB codes in codes-based cryptosystems, namely, the McEliece-like and Niederreiter cryptosystems, and realized that these solutions keep the size of the public key unchanged while they preserve, or even enhance, security parameters of the cryptosystem. They have also proposed a generalization to the Courtois-Finiasz-Sendrier (CFS) digital signature scheme \cite{Dariti12}, which may increase the efficiency of this scheme. In particular, the LEB codes allow the number of attempts required to find a decodable hash to be reduced.  In the same work \cite{Dariti12}, a  channel  model which  enables  LEB  codes  to  be  used  in  correcting  errors  raised  from  transmission  over  a noisy channel was designed. LEB codes have application in the field of steganography, where Dariti \textit{et al. } \cite{Dsa11} have introduced a protocol of steganography based on LEB codes. They have shown that employing convenient codes enhances the reliability of this protocol compared to other known steganography protocols. The method developed by the authors generalizes the classic idea of Crandall, and allows more data from the cover object to be exploited. They took the example of hiding information in a greyscale image. The proposed method makes it possible to use not only the least significant bits, but also the most significant bits. The likelihood of a bit being changed is related to its influence on the image quality. The LEB codes make it possible, thanks to their block structure, to manage the different types of bits at the same time. This scheme was ameliorated  in \cite{DSa12}.  In \cite{Bel19},  an algebraic study of cyclic LEB codes and some related results are discussed. In \cite{Bela19}, the authors give the existence conditions of infinite families of perfect LEB codes, and extended the notions of Hamming and simplex codes to linear error-block codes. In \cite{bel21a}, the tensor product of LEB codes was investigated. In \cite{bel22}, the first description of polynomial enumerator weight for LEB codes is hold. The author gives a simplest form of the polynomial weight enumerator for some particular families of LEB codes.

Let $ q $ be a prime power, $ \mathbb{F}_q$ is a finite field with $ q $ elements. Let $ n $ be a positive integer, the space $ \mathbb{F}_q^n $ can be viewed as a direct sum of spaces $ \mathbb{F}_q^{n_i} $ where $ i=1,2,\ldots,s $ are non null positive integers satisfying $ n=\sum_{i=1}^{s}n_i $ and $ n_1\geq n_2\geq\ldots \geq n_s\geq 1 $. Each vector in $ \mathbb{F}_q^n $ can be considered as a concatenation of $ s $ blocks $ u=(u_1,u_2,\ldots,u_s) $ where $ u_i\in \mathbb{F}_q^{n_i} $. Any change that happens inside a block causes a single error in the vector regardless to its magnitude. An LEB code is a linear subspace of $ \mathbb{F}_q^n $ endowed with the metric that measures the number of distinct blocks. This metric, clearly related to the integers $ n_i $ (lengths of blocks), is called the $ \pi $-metric where $ \pi $ is the partition of $ n $ noted $ \pi = [n_1 ][n_2 ]\ldots [n_s ] $ and called the type of the code. A classical linear error correcting code is a linear error-block code for which $ n_i=1 $ for $ i=1,\ldots,s $.

The family of cyclic codes has a significant role in the theory of error correcting codes. In the literature they have been extensively studied, since they have amazing algebraic properties. In fact, these codes are the most studied of all codes. Many well-known codes, such as BCH, Kerdock, Golay, Reed-Muller, Preparata, Justesen, and binary Hamming codes, are either cyclic codes or are constructed from cyclic codes. Quasicyclic codes are a generalization of cyclic codes.

In \cite{DSa12b} R. Dariti \textit{et al.} have considered a composition of an integer $ n=s\sum_{i=1}^{\lambda}n_i $ denoted by $ \pi=([n_1]\ldots[n_\lambda])^s$ where $ n_1> n_2>\ldots> n_s\geq 1 $ and defined a cyclic LEB code $ C $ over $ V_{\pi}=(V)^s $ where $ V=\mathbb{F}_q^{n_1}\oplus\ldots\oplus\mathbb{F}_q^{n_{\lambda}} $ and $ q $ is a prime power to be a code stable by the cyclic shift $ \sigma_{\pi} $ defined as follows:
$$  \begin{array}{cccc}
	\sigma_\pi: & \underbrace{V\oplus\ldots \oplus V}_{s \hspace{0.1cm} times}  & \longrightarrow &\underbrace{V \oplus\ldots\oplus V}_{s \hspace{0.1cm} times}  \\ 
	&(a_1,a_2,\ldots, a_s)  & \longmapsto  & (a_s,a_1,\ldots, a_{s-1})
\end{array} $$ 
\textit{i.e.} if $ c$ is a codeword of $ C $, then $$ \sigma_{\pi}(c)\in C. $$

The notation $ \pi=([n_1]\ldots[n_{\lambda}])^s$ means that each block of a codeword $ c $ in $ C $ is partitioned into $ j $ sub-blocks of length $ n_i\ (1\leq i\leq \lambda) $, and this block is of length $ n=s\sum_{i=1}^{\lambda}n_i $. Besides, the $ \pi- $weight of each codeword remain unchanged even if we change the partition used into the sub-block. Therefore, the partition into sub-blocks plays no role in the definition of cyclicity of LEB codes. That is why, in this paper we slightly modify this definition by talking about the cyclic LEB codes only when the blocks of codewords among an LEB cyclic code have the same length $ m $. i.e. the partition considered is $\pi = [m]^s\ (m ,s \in \mathbb{N}^{\star})$ where  $ n=m\times s $ is the length of the code.

The paper is organized as follows: in the next section, we give a servy of works that are held on the subject of post-quantum signatures. In Section $3$, we revisite the LEB cyclic codes construction. In Section $4$, We give a detailed algebraic construction of Quasi-cyclic LEB codes. In Section $5$, after having proving the NP-completness of decoding quasi-cyclci codes, we propose a decoding algorithm to this family of codes.  
Our scheme is presented in Section $6$, together with a detailed security analysis (Section $6$), and its performance are discussed in Section $7$. We conclude in Section $8$.

\section{Related work}
As conventional systems face threats from quantum computing since the proposed Shor algorithm \cite{shor} in 1994, post-quantum cryptography  renewed interest in code-based cryptography that started with McEliece \cite{mce} and Niederreiter \cite{nied}. Xinmei suggested a digital signature scheme \cite{xinmei} based on error-correcting codes in 1990, similar to  Rao-Nam \cite{nam} private-key cryptosystem, relying also on the difficult problem of factoring large matrices. Not long after that, Harn and Wang \cite{harn}  introduced a homomorphism attack on the Xinmei scheme in 1992, while proposing their own modified version, dealing with a nonlinear function, which was found vulnerable itself to chosen plaintext attacks by Alabbadi and Wicker, and to key derivation by Van Tilburg.\\
Due to Fiat-Shamir transformation \cite{fiat}, that is able to take Zero-Knowledge Identification Schemes (ZKID) into a signature, another approach was made possible by Stern \cite{stern} for code-based signature schemes.  However, these signature have high soundness errors, needing multiple repetitions which results in a longer output. Trapdoor constructions remain another option but with slow speed and large keys.\\
Courtois, Finiasz, and Sendrier (CFS) \cite{cfs} proposed another code-based signature with large keys using Niederreiter cryptosystem, 
as Kabastianskii, Krouk and Smeets (KKS) \cite{kks} have done  in 1997 with a signature relying on random error-correcting codes, instead of Goppa codes, but was later found unsecure.
\section{LEB cyclic codes}
In the literature, a cyclic code is a code that is invariant by a cyclic shift. That means that if the coordinates of a codeword are shifted to the next position, then the obtained word is a codeword. In the definition bellow, we define the cyclicity of a cyclic LEB code using the analogue method. 

Let $\pi$ be a partition of an integer $n$ of the form	$  \pi = [m]^s$.

Let $ V_{\pi}=\oplus^s_{i=1} \mathbb{F}_q^m$ . Each vector $u  \in \mathbb{F}_q^m$ can be written uniquely as $u=(u_1,u_2,..., u_s)$ with  $a_i\in V\ (i =1,...,s)$.

\begin{definition}\label{def1}
	An $[n, k, d]$ code $C$ of type $\pi = [m]^s$ is $\pi$-cyclic if for
	each $a  \in C$ we have $\sigma_\pi (a) \in C$ where		
	
	$$  \begin{array}{cccc}
		\sigma_\pi: & \underbrace{\mathbb{F}_q^m\oplus ... \oplus \mathbb{F}_q^m}_{s \hspace{0.1cm} times}  & \longrightarrow &\underbrace{\mathbb{F}_q^m \oplus ... \oplus\mathbb{F}_q^m}_{s \hspace{0.1cm} times}  \\ 
		&(u_1,u_2,..., u_s)  & \longrightarrow  & (u_s,u_1,..., u_{s-1})
	\end{array} $$
	$\sigma_\pi$ is a cyclic shift of one block to the right.
\end{definition}
\begin{remark}
	If $ \pi=[1]^n$, the classical definition of cyclic code is found by setting $ m=1 $.
\end{remark}
\begin{example}
	Let $ \pi=[2]^2 $. The LEB code $$C=\{000|000;010|110;110|010;100|100\} $$ of type $ \pi $ is a $ [6,2,2] $ $ \pi $-cyclic code.
\end{example}

\begin{definition}[The binary operation $ \star $]\label{def 2:11}
	Let $R_{\pi}= (\mathbb{F}_q[X],+,.) $ be the ring of polynomials with coefficients in $ \mathbb{F}_q $ and provided with the classical addition $ + $ and the classical multiplication $ . $ of polynomials in $ \mathbb{F}_q[X] $, and $ <X^n-1> $ is the principal ideal generated by the polynomial $ X^n-1 $ (i.e. $ <X^n-1>=\{f.(X^n-1)/f\in \mathbb{R}_q\} $). We define $ \star $ to be the mapping defined by:
	\begin{align*}
		\star : \frac{\mathbb{F}_q[X]}{<X^n-1>}\times\frac{\mathbb{F}_q[X]}{<X^n-1>}\longrightarrow&\frac{\mathbb{F}_q[X]}{<X^n-1>} \\
		(P(X),Q(X))\longmapsto & P(X)\star Q(X)\\
		&=X^{m-1}.P(X).Q(X).
	\end{align*}
	$ \star $ is a binary operation over $ \frac{\mathbb{F}_q^n [X]}{<X^n-1>} $ which verify the following properties:
\end{definition}
\begin{itemize}
	\item for  $ i$ and $ j$ in $\mathbb{N} $; $ X^{i}\star X^{j}= X^{i+m-1}.X^{j}=X^{j}\star X^{i}$ 	where $ . $ is the classical multiplication.
	\item for $i,j, k \in \mathbb{N}$;  $X^{i}\star(X^{j}\star X^{k})= (X^{i}\star X^{j})\star X^{k}$.
	\item  $ \overbrace{X\star\ldots\star X}^{\alpha\ time}=X^{\star\alpha}=X^{(\alpha-1)m+1} $ where $ \alpha $ is an integer $ \geq 1 $. 
	\item $ X^{\star 0}=\textbf{1}^{\star} $ the unity element of the binary operation $ \star $.
	\item $ 1\star X =X^m$ where $ 1 $ denotes here $ X^0 $.
	\item $ X^{\star s}\star P(X)=P(X) mod (X^n - 1)$. In fact, 
	\begin{align*}
		X^{\star s}\star P(X)&=X^{m-1}.X^{(s-1)m+1}.P(X)\\
		&=X^{ms}.P(X)\\
		&=P(X) mod (X^n - 1).
	\end{align*}
	\item $ 1\star X =X^m$ where $ 1 $ denotes here $ X^0 $
\end{itemize}
\begin{proposition}\label{prop 2:0}
	Let $\pi$ be a partition of a positive integer $n$ of the form	$  \pi = [m]^s$. Let $ \mathbb{F}_q $ be the finite field of $ q $ elements and $ R_{\pi}=\frac{\mathbb{F}_q[X]}{<X^n-1>}$ be the quotient ring. Then $  (R_{\pi},+,\star) $ where $ + $ is the classical addition of polynomial and $ \star $ is the polynomial multiplication defined by \ref{def 2:11}.	Then $ (R_{\pi},+,\star) $ is a commutative ring, with $\textbf{1}^{\star}=X^{n-m+1}$ is the unity element of the law $\star $.
\end{proposition}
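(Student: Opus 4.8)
The plan is to verify that $(R_\pi,+,\star)$ satisfies the ring axioms, treating the additive abelian group structure of the underlying quotient $\mathbb{F}_q[X]/\langle X^n-1\rangle$ as already known and concentrating on the multiplicative law $\star$. The decisive observation is that $\star$ is nothing but the ordinary product twisted by the fixed factor $X^{m-1}$, and that $X^{m-1}$ is a unit in the classical quotient ring because $X\cdot X^{n-1}\equiv 1 \pmod{X^n-1}$. This single fact drives every verification and, together with $n=ms$, pins down the identity.

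First I would dispose of commutativity, associativity, and distributivity, all of which reduce to the corresponding properties of the classical product. Commutativity is immediate from $P\star Q = X^{m-1}PQ = X^{m-1}QP = Q\star P$. For associativity one computes both groupings and finds the common value $X^{2(m-1)}PQR$, so the two sides agree. Distributivity follows by factoring out $X^{m-1}$: $P\star(Q+R)=X^{m-1}P(Q+R)=P\star Q+P\star R$, and the right-hand distributive law then follows by commutativity. None of these requires anything beyond the laws already holding in $\mathbb{F}_q[X]/\langle X^n-1\rangle$.

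The substantive step, and the one I expect to be the only real obstacle, is to pin down the multiplicative identity. I must produce $E$ with $E\star P\equiv P$ for all $P$, i.e. $X^{m-1}E\equiv 1 \pmod{X^n-1}$; in other words $E$ must be the inverse of the unit $X^{m-1}$ in the classical quotient. Since $n=ms$, one checks $X^{m-1}\cdot X^{n-m+1}=X^{n}\equiv 1 \pmod{X^n-1}$, so the candidate $\mathbf{1}^{\star}=X^{n-m+1}$ works, and then $\mathbf{1}^{\star}\star P = X^{n}P\equiv P$. Uniqueness of the identity is automatic once existence is secured.

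A cleaner packaging that I would favor is to exhibit the scaling map $\psi(P)=X^{m-1}P \bmod (X^n-1)$ as an explicit isomorphism from $(R_\pi,+,\star)$ onto the standard commutative ring $(R_\pi,+,\cdot)$: it is additive and bijective because $X^{m-1}$ is a unit, and $\psi(P\star Q)=X^{2(m-1)}PQ=\psi(P)\psi(Q)$ shows it transports $\star$ to the ordinary product. All ring axioms then follow at once by transport of structure, and the identity of $(R_\pi,+,\star)$ is forced to be $\psi^{-1}(1)=X^{-(m-1)}=X^{n-m+1}$, recovering the stated $\mathbf{1}^{\star}$.
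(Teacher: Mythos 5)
Your proof is correct, and its computational core coincides with the paper's own argument: the paper likewise verifies associativity by reducing both groupings to $X^{2m-2}P(X)Q(X)R(X)$, treats commutativity and distributivity as immediate from the definition of $\star$, and confirms the unity element by the computation $X^{n-m+1}\star P(X)=X^{n-m+1+(m-1)}P(X)=X^{n}P(X)=P(X)$ modulo $X^n-1$. Where you genuinely depart from the paper is the transport-of-structure packaging: you make explicit that $\star$ is the classical product twisted by $X^{m-1}$, that $X^{m-1}$ is a unit of $\mathbb{F}_q[X]/\langle X^n-1\rangle$ with inverse $X^{n-m+1}$, and hence that the scaling map $\psi(P)=X^{m-1}P$ is an additive bijection satisfying $\psi(P\star Q)=\psi(P)\psi(Q)$. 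This yields all the ring axioms at once and forces the identity to be $\psi^{-1}(1)=X^{n-m+1}$, whereas the paper checks each axiom by hand; it also delivers strictly more than the proposition asks, namely that $(R_{\pi},+,\star)$ is isomorphic as a ring to the classical quotient, so for instance its ideals coincide with the classical ideals (a unit multiple of an ideal is that ideal) --- an observation that would streamline the paper's subsequent theorem identifying $\pi$-cyclic codes with ideals of $(R_{\pi},+,\star)$. Two minor points: the identity computation uses only the congruence $X^n\equiv 1$, not the divisibility $n=ms$, so your appeal to $n=ms$ there is dispensable; and strictly speaking one should remark that $\star$ is well defined on the quotient, which is immediate because multiplication by $X^{m-1}$ is well defined modulo $X^n-1$ (a point the paper also passes over silently).
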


\begin{proof}
	Let $ P(X) =\sum_{i=0}^{n-1}a_i\star X^i$, $ Q(X) =\sum_{i=0}^{n-1}b_iX^i$ and $ R(X)=\sum_{i=0}^{n-1}c_iX^i $ three polynomials in $ R_{\pi}$. Then 
	
	1. $ (R_{\pi},+) $ is an abelian group (where $ + $ is the usual polynomial addition).
	
	2. By construction,  the binary operation $\star $ is commutative and is distributive with respect to the usual addition $ + $.
	
	3.  The operation $ \star $ is associative. In fact
	
	\begin{align*}
		(P(X)\star Q(X))\star R(X) &=  (\sum_{i=0}^{n-1}a_i(X^i\star Q(X)))\star R(X) \\
		&=  (\sum_{i=0}^{n-1}a_iX^{i+m-1}.Q(X))\star R(X) \\
		&=  X^{m-1}.Q(X).(\sum_{i=0}^{n-1}a_i.X^i\star R(X)) \\
		&=  X^{m-1}.Q(X).(\sum_{i=0}^{n-1}a_{i}.X^{i+m-1}. R(X)) \\
		&=  X^{2m-2}.P(x).Q(X).R(X)                        
	\end{align*} 
	
	and 
	
	{\small \begin{align*}
			P(X)\star \left( Q(X)\star R(X) \right) =&  P(X)\star(\sum_{i=1}^{n-1}b_i(X^i\star R(X)))  \\
			=&  P(X)\star(\sum_{i=0}^{n-1}b_iX^{i+m-1}.R(X))\\
			=&  (P(X)\star\sum_{i=0}^{n-1}b_i.X^i).X^{m-1}.R(X) \\
			=&  (\sum_{i=0}^{n-1}b_i.(P(X)\star X^i)).X^{m-1}.R(X)  \\
			=&  (\sum_{i=0}^{n-1}b_i.(P(X). X^{i+m-1})).X^{m-1}.R(X)  \\
			=&  X^{2m-2}.P(x).Q(X).R(X).                         
		\end{align*}
	}
	4. The operation $ \star $ admits a unity element $ \textbf{1}^{\star}=X^{n-m+1} $. In fact $$\begin{array}{ccl}
		\textbf{1}^{\star}\star P(X)&=& X^{n-m+1}\star P(X) \\
		&=&X^{n-m+1+(m-1)}. P(X) \\
		&=&X^{n}. P(X)  \\
		&=& P(X).  \\
	\end{array}$$
	Since $ (R_{\pi},\star) $ is commutative, then $$P(X)\star \textbf{1}^{\star}= \textbf{1}^{\star}\star P(X) =P(X) $$
	
\end{proof}
Let $ C $ be an $ [n,k]_q $ $ \pi $-cyclic code of type $\pi=[m]^s$. Each codeword $ c=(c_1,c_2,\ldots,c_l)$ of $ C $ is associated with a unique polynomial 
$$C(X)=c_1(X)+X\star c_2(X)+\ldots + X^{\star (s-1)}\star c_s(X)$$ 
where for all $i=1,\ldots,s $, $ c_i(X)= \sum_{j=0}^{m-1}\alpha_{ij}.X^{j}$ and $ \alpha_{i,j} $ is the $ j^{th} $ component  of the vector $ c_i $.
%
\subsection{Algebraic representation}
Let $ C $ be an $ [n,k]_q $ $ \pi $-cyclic code of type $\pi=[m]^s$. Let $ c=(c_1,c_2,\ldots,c_s)$ be a codeword of $ C $. We associate with the vector $ c $ the polynomial 
$$C(X)=c_1(X)+X\star c_2(X)+\ldots + X^{\star (s-1)}\star c_s(X)$$ 
where $ c_i(X)= \sum_{j=1}^{m-1}\alpha_i.X^{j}$ for $i=1,\ldots,s $ and $ \alpha_j $ is the $ j^{th} $ component  of the vector $ c_i $.\\	
We have the following result:\\	
Let $ C $ be an $ [n,k]_q $ $ \pi $-cyclic code of type $\pi=[m]^s$ where $n=sm$. \\
\begin{theorem}
	A linear error-block code $ C $ is $ \pi $-cyclic if and only if $ C $ is an ideal of $( R_{\pi},+,\star) $.
\end{theorem}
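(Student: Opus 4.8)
The plan is to transport the problem from the block space $V_\pi$ to the ring $R_\pi$ via the polynomial encoding $c\mapsto C(X)$ introduced above, and then run the classical ``cyclic code $=$ ideal'' argument inside the $\star$-algebra of Proposition \ref{prop 2:0}. First I would check that $c\mapsto C(X)=\sum_{i=1}^{s}X^{\star(i-1)}\star c_i(X)$ is an $\mathbb{F}_q$-linear bijection from $V_\pi$ onto $R_\pi$: linearity is clear because each $c_i(X)$ depends linearly on the block coordinates, and bijectivity follows since $X^{\star(i-1)}\star c_i(X)=X^{(i-1)m}c_i(X)$ occupies exactly the degree range $[(i-1)m,\,im-1]$, so the $s$ blocks fill the $n$ coordinates of a degree-$<n$ representative without overlap. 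Under this identification a linear code $C$ becomes an $\mathbb{F}_q$-subspace of $R_\pi$, so both sides of the claimed equivalence already agree on the additive/linear part and only the multiplicative structure remains to be matched.

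The key computational lemma I would isolate is that the block shift $\sigma_\pi$ corresponds to $\star$-multiplication by $X$. Using Definition \ref{def 2:11} one has $X\star C(X)=X^{m-1}\cdot X\cdot C(X)=X^{m}C(X)\pmod{X^n-1}$; expanding $C(X)=\sum_{i=1}^{s}X^{(i-1)m}c_i(X)$ and using $X^{sm}=X^n\equiv 1$ shows that $X\star C(X)$ is precisely the polynomial attached to $(c_s,c_1,\dots,c_{s-1})=\sigma_\pi(c)$, the top block wrapping around to the bottom. This is the exact analogue of ``multiplication by $X$ is the cyclic shift'' in the classical theory, and the properties listed after Definition \ref{def 2:11} (associativity of $\star$, and $X^{\star s}\star P=P$) are what make the bookkeeping close up modulo $X^n-1$.

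With the lemma in hand both implications follow the usual template. For the easy direction, if $C$ is an ideal of $(R_\pi,+,\star)$ then it is in particular an $\mathbb{F}_q$-subspace and is stable under $\star$-multiplication by $X$, hence stable under $\sigma_\pi$, so $C$ is $\pi$-cyclic. For the converse, suppose $C$ is $\pi$-cyclic; it is linear by hypothesis, and the lemma gives $X\star C(X)\in C$ for every codeword. Iterating and invoking associativity of $\star$ yields $X^{\star k}\star C(X)\in C$ for all $k\ge 0$, and closing under $\mathbb{F}_q$-linear combinations gives $P(X)\star C(X)\in C$; since $\mathbf{1}^{\star}=X^{n-m+1}$ is the identity, this would exhibit $C$ as absorbing $\star$-multiplication by every element of $R_\pi$, i.e.\ as an ideal.

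The step I expect to be the main obstacle is the very last one: passing from stability under $\star$-multiplication by the powers $X^{\star k}$ to stability under $\star$-multiplication by an \emph{arbitrary} element of $R_\pi$. In the classical case $m=1$ this is immediate, since $\{1,X,\dots,X^{n-1}\}$ is an $\mathbb{F}_q$-basis and every polynomial is a linear combination of shifts; for general $m$ the family $\{X^{\star k}\}$ must be shown to generate the full set of multipliers needed for the ideal property, and this spanning/generation statement is where the genuine content sits. I would therefore treat this reduction explicitly — tracking exactly which ordinary multiplications by powers of $X$ are produced by the $\star$-powers of the shift — rather than appealing to the classical ``powers of $X$ span everything'' slogan, which does not transfer verbatim to the $\star$-operation.
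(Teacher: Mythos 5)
Your plan reproduces the paper's own argument step for step: the same encoding $c\mapsto C(X)$, the same key lemma that the block shift $\sigma_\pi$ is $\star$-multiplication by $X$ (equivalently ordinary multiplication by $X^m$ modulo $X^n-1$), and the same easy direction. But the obstacle you flag in your final paragraph is not a technical lemma awaiting a careful proof --- it is a genuine gap that cannot be closed, because the forward implication of the theorem is false as stated. The $\star$-powers of $X$ are $X^{\star k}=X^{(k-1)m+1}$, and since $X^{\star s}$ acts as the identity there are only $s$ distinct ones; their $\mathbb{F}_q$-span is an $s$-dimensional subring of the $n$-dimensional ring $R_\pi$, which is proper whenever $m>1$. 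Linearity plus $\pi$-cyclicity gives closure of $C$ under $\star$-multiplication by that subring only, and this is strictly weaker than being an ideal. Concretely, take $q=2$ and $\pi=[2]^2$ (so $m=2$, $n=4$), and let $C=\{0000,\,1010\}$, i.e.\ $C=\{0,\;1+X^2\}$ under the encoding. This $C$ is linear and $\pi$-cyclic (the block shift fixes $1010$), yet $1\star(1+X^2)=X(1+X^2)=X+X^3\notin C$, so $C$ is not an ideal of $(R_\pi,+,\star)$. The structural reason is that $P\mapsto X^{m-1}P$ is a ring isomorphism from $(R_\pi,+,\star)$ onto the ordinary quotient ring $(\mathbb{F}_q[X]/\langle X^n-1\rangle,+,\cdot)$, and since every ordinary ideal $J$ satisfies $XJ=J$ ($X$ being a unit), this isomorphism identifies the ideals of $(R_\pi,+,\star)$ with the classical cyclic codes of length $n$ --- a strictly smaller class than the $\pi$-cyclic codes when $m>1$.

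You should also know that the paper's own proof contains exactly the gap you identified: after computing $\hat{c}(X)=X\star c(X)$ it concludes that since $(C,+)$ is an abelian group, $C$ is an ideal of $R_\pi$, i.e.\ it passes from closure under $\star$-multiplication by $X$ to the full ideal property with no justification --- the very spanning step you declined to wave through. So your diagnosis is correct and your caution is vindicated, but the right conclusion is not that a finer tracking of the $\star$-powers will rescue the argument; it is that the statement must be weakened. What your argument (and the paper's) actually proves is the equivalence: $C$ is $\pi$-cyclic if and only if $C$ is a module over the subring of $(R_\pi,+,\star)$ generated by $X$, equivalently an $\mathbb{F}_q[X^m]$-submodule of $\mathbb{F}_q[X]/\langle X^n-1\rangle$, i.e.\ a module over $\mathbb{F}_q[Y]/\langle Y^s-1\rangle$ with $Y$ acting as $X^m$. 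That module-theoretic statement is the correct analogue of the classical ``cyclic $=$ ideal'' theorem for type $\pi=[m]^s$, and any later result in the paper that leans on the literal ideal property (such as Euclidean division by a single generator polynomial) inherits this defect.
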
	
\begin{proof}
	
	Take $ c=(c_1,c_2,\ldots,c_s)$ a codeword of $ C $. Then  $$\hat{c} =\sigma_{\pi}(c)=(c_s,c_1,\ldots,c_{s-1}) $$ ($ \sigma_{\pi} $ is the cyclic shift defined in Definition (\ref{def1})) is also a codeword of $ C$
	and
	
\begin{multline*}
\hat{c}(X)= c_s(X)+X\star c_1(X)+\ldots +\\ X^{\star (s-1)}\star c_{s-1}(X)\\
       	  = c_s(X)+X\star c_1(X)+\ldots + X^{\star (s-1)}\star c_{s-1}(X)+\\ X^{\star s}\star c_{s}(X)-X^{\star s}\star c_{s}(X)\\
		  = X\star(c_1(X)+X\star c_2(X)+\ldots+X^{\star (s-1)}\star c_s(X))+\\ c_s(X)-X^{\star s}\star c_{s}(X)\\
		  =  X\star c(X)+c_s(X).(1-X^n)\\
		  = X\star c(X)\in \ R_{\pi}.
\end{multline*}

	Since $ \hat{c}(X)\in C $, then $ X\star c(X) \in C$. By definition, $ (C,+) $ is an abelian group, thus $ C $ is an ideal of $ R_{\pi} $. \\		
	Reciprocally, if $ C $ is an ideal of $ R_{\pi} $, then for all $ c(X) $ in $ C $,  $ r(X)\star c(X) $ is in $ R_{\pi} $ where $ r(X) $ is an arbitrary polynomial in $ R_{\pi} $. Let $ c(X)=\sum_{i=0}^{ms-1}\alpha_i X^i$ in $ C $ where $ \alpha\in\mathbb{F}_q $. Thus 
	\begin{multline*}
		c(X)= \sum_{i=0}^{m-1}\alpha_i X^i+\sum_{i=m}^{2m-1}\alpha_i X^i+\ldots+\\ \sum_{i=(s-1)m}^{n-1}\alpha_i X^i\\
		= \sum_{i=0}^{m-1}\alpha_i X^i+X^m.\sum_{i=m}^{2m-1}\alpha_i X^{i-m}+\ldots \\ 
		 + X^{m(s-1)}.\sum_{i=(s-1)m}^{n-1}\alpha_i X^{i-m(s-1)}\\
		= c_1(X)+X^m.c_2(X)+\ldots + X^{m(s-1)}.c_{s-1}(X)\\
		=c_1(X)+X\star c_2(X)+\ldots + X^{\star (s-1)}\star c_{s}(X).\\
	\end{multline*}
	where for all $ i=1,\ldots, s $: $$ c_i(X)=\sum_{j=(i-1)m}^{im-1} \alpha_{j-(i-1)m}X^{j-(i-1)m}$$. The polynomial $ c(X) $ is associated with vector $ (c_1,c_2,\ldots,c_s) $. Now, if we multiply $ c(X) $ by $ X $ we get $ X\star c(X)=X^m.c(X)=c_s(X)+X\star c_1(X)+\ldots + X^{\star (s-1)}\star c_{s-1}(X) $ which is a polynomial associated with the vector $ (c_l,c_1,\ldots,c_{s-1}) $. Thus, multiplying $ c(X) $ by $ X $ in $ R_{\pi} $ corresponds to a cyclic block shift. Finally, $ C $ is a $ \pi $-cyclic code.
\end{proof}

\begin{definition} 
	
	Let $ P(X) =\sum_{i=0}^{s-1}a_i(X)\star X^{\star i}$ be a polynomial of $ R_{\pi} $ with $ a_i(X)\in\mathbb{F}_q[X] $ such that $ a_{s-1}\neq 0 $. The integer $ s-1 $ is called the $ \star $-degree and is denoted by $\star-deg(P(X))=s-1$.
\end{definition}
\begin{definition}  Let $ n $ be a positive integer, and $ s $ and $ s' $ two integers $ \leq n $.
	Let $ P(X) =\sum_{i=0}^{s-1}a_i(X)\star X^{\star i}$, and $ Q(X) =\sum_{i=0}^{s'-1}b_i(X)\star X^{\star i}$ be two polynomials of $ R_{\pi} $, with $ a_i(X), b_i(X)\in\mathbb{F}_q[X] $ such that   $ a_{s-1}\neq 0  $ and  $ b_{s'-1}\neq 0 $. We say that $\star-deg(P(X))\leq \star-deg(Q(X)) $ if and only if ($ s<s' $ or ($ s=s' $ and $ deg(a_{s-1})<deg(b_{s'-1}) $)).
\end{definition}
\subsection{Polynomial Representation}
\label{sec 2:3}
\begin{theorem}\label{th 2:2}
	Let $ C $ be an $ [n,k]_q $ $ \pi $-cyclic code of type $ \pi=[m]^s $. 
	\begin{enumerate}
		\item There exist a unique monic $ g(X) $ in $ C $,  of minimal degree.
		\item $g(X)$ $ \star $-divides every word $c(X)$ in $ C $.
		\item $ g(X) $ $ \star $-divides $ X^{n}-1 $ in $ \mathbb{F}_q [X] $;
	\end{enumerate}
	The polynomial $g$ thus defined is called the generator polynomial of the code $C$. (Here the degree is related to the usual multiplication)
\end{theorem}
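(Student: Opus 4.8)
The plan is to reduce the claim to the classical generator-polynomial theorem for ordinary cyclic codes, exploiting that the law $\star$ differs from ordinary multiplication only by the unit $X^{m-1}$. The first thing I would record is that a subset $C\subseteq R_\pi$ is a $\star$-ideal if and only if it is an ordinary ideal of $\mathbb{F}_q[X]/\langle X^n-1\rangle$. One direction is immediate, since $r\star c=(X^{m-1}r)c$; for the converse, because $X^n\equiv 1$ the indeterminate $X$ is invertible, and for any $r$ one has $(X^{n-m+1}r)\star c=X^{m-1}X^{n-m+1}rc=X^{n}rc\equiv rc$, so $\star$-closure forces ordinary closure. Equivalently, $\psi(P)=X^{n-m+1}P=\mathbf{1}^{\star}\cdot P$ is a ring isomorphism from $(R_\pi,+,\cdot)$ onto $(R_\pi,+,\star)$ carrying $1$ to $\mathbf{1}^{\star}$. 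Hence, \emph{as a set}, the $\pi$-cyclic code $C$ is exactly an ordinary cyclic code of length $n$, and the three assertions become a translation of the classical theorem into $\star$-language.

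For part (1) I would argue as in the classical case, which is legitimate since it uses only that $C$ is an additive subgroup together with minimality of the ordinary degree. Among the nonzero words of $C$ I pick one of least ordinary degree and normalise its leading coefficient to obtain a monic $g$. If $g_1,g_2$ were both monic of minimal degree, then $g_1-g_2\in C$ has strictly smaller degree, forcing $g_1-g_2=0$; this yields existence and uniqueness simultaneously.

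For part (2), given $c\in C$ I perform ordinary Euclidean division $c=qg+r$ with $\deg r<\deg g$. Reducing modulo $X^n-1$ and using that $C$ is an ideal shows $r\in C$ (its degree is already below $n$), so minimality gives $r=0$ and $c=g\,a$ in $R_\pi$. I then convert this to a $\star$-factorisation: with $b=X^{n-m+1}a$ one gets $g\star b=X^{m-1}g\,X^{n-m+1}a=X^{n}ga\equiv c$, so $g$ indeed $\star$-divides $c$. Part (3) starts from the classical fact that $g\mid X^{n}-1$ in $\mathbb{F}_q[X]$ and re-reads this divisibility through $\star$, the natural $\star$-analogue of the relation $X^{n}-1\equiv 0$ being the identity $X^{\star s}=\mathbf{1}^{\star}$ already checked in the list of properties of $\star$.

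The step I expect to be the \textbf{main obstacle} is fixing, and then consistently using, the precise meaning of ``$\star$-divides'', especially in statement (3). A literal reading of $f=g\star h$ inside $\mathbb{F}_q[X]$ forces a spurious factor $X^{m-1}$ on the right and cannot reproduce $X^{n}-1$; the divisibility in (2) and (3) must therefore be read in the quotient $R_\pi$ (equivalently, transported through $\psi$), and I would make this convention explicit before proving anything. A second, purely computational route would avoid $\psi$ and imitate the classical proof with a genuine $\star$-Euclidean division, cancelling leading blocks one at a time with respect to the $\star$-degree introduced above; there the difficulty is the carry between blocks, since a product of two blocks of degree $<m$ may reach degree $2m-2$ and overflow into the next block. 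Because $\psi$ removes this bookkeeping entirely, I would prefer the first route and invoke the $\star$-degree order only where an intrinsic description is wanted.
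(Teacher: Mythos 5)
Your proposal is correct relative to the paper's own framework, but it takes a genuinely different route from the paper's proof. The paper runs the classical argument directly inside $(R_\pi,+,\star)$: uniqueness of $g$ by subtracting two minimal monic candidates, a $\star$-Euclidean division $c(X)=g(X)\star q(X)+r(X)$ whose remainder lies in $C$ (using that $C$ is an ideal, from the preceding characterization theorem) and vanishes by minimality, and the same division applied to $X^n-1$. You instead transport everything to classical cyclic-code theory through the unit $X^{m-1}$: your isomorphism $\psi(P)=X^{n-m+1}P$ between $(R_\pi,+,\cdot)$ and $(R_\pi,+,\star)$, and the resulting identification of $\star$-ideals with ordinary ideals, are correct, and they let you quote the classical generator theorem instead of redoing it. This buys three things the paper's proof lacks. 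First, the paper never establishes that $\star$-division with remainder of degree $<\deg g$ exists: ordinary division by $X^{m-1}g(X)$ only gives a remainder of degree up to $\deg g+m-2$, and your ordinary-division-then-absorb-the-unit conversion is exactly the missing justification. Second, your uniqueness step ($g_1-g_2\in C$ of smaller degree) is sound, whereas the paper's normalization $a\star(g_1-g_2)$ raises the degree by $m-1$ and does not produce the claimed smaller-degree monic word when $m>1$. Third, you correctly diagnose the defect in item (3): $X^n-1=g(X)\star h(X)=X^{m-1}g(X)h(X)$ is impossible in $\mathbb{F}_q[X]$ for $m>1$ (the factor $X$ divides the right-hand side but not $X^n-1$), so the real content is the ordinary divisibility $g\mid X^n-1$, which your route delivers and which is what the proof of the following theorem (dimension and generator matrix, via $g_0\neq 0$) actually needs; the paper's proof of (3) silently has this problem.

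One caveat applies to you and to the paper equally. Both arguments rest on the preceding theorem that a $\pi$-cyclic code is an ideal of $(R_\pi,+,\star)$ --- the paper invokes it when asserting $r(X)\in C$, you when concluding that $C$ ``is exactly an ordinary cyclic code.'' Your reduction makes the hidden consequence explicit: combined with your (correct) equality of $\star$-ideals and ordinary ideals, that theorem would force every $\pi$-cyclic code to be an ordinary cyclic code, which the paper's own example contradicts. Indeed, the $[6,2,2]$ code $\{000000,\,010110,\,110010,\,100100\}$ of type $[3]^2$ is $\pi$-cyclic but not closed under the one-coordinate shift, and its minimal monic word $g=1+X^3$ does not $\star$-divide the codeword $X+X^3+X^4$: since $X^{m-1}$ is a unit, the $\star$-multiples of $g$ in $R_\pi$ coincide with the ordinary ideal $(1+X^3)$, i.e.\ the vectors of the form $(a,b,c,a,b,c)$, and $(0,1,0,1,1,0)$ is not of that shape. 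So for genuinely $\pi$-cyclic codes, which are only modules over the $\star$-subring generated by $X$, the statement can actually fail; the flaw sits in the paper's framework rather than in your derivation, but you should state the ideal-characterization theorem explicitly as a hypothesis of your reduction rather than present it as unconditional.
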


\begin{proof}\begin{enumerate}
		\item  Suppose that there exist two such unit polynomials $g_1(X)$ and $ g_2(X) $; then we can always construct a unitary polynomial of the form $a\star(g_1 - g_2)(X)$ ( for $ a $ a constant polynomial of $R_{\pi}$) which belongs to the code and which is of degree strictly less than $deg (g_1)$. But since $ deg (g_1) = deg (g_2) $ is the smallest possible degree, we get a contradiction and so $ g(X) $ is unique.
		\item  	Let $ c (X) $ be a non-zero codeword, then by  the Euclidean division (respecting the operation $ \star $) of $ c (X) $ by $ g (X) $, we obtain $ c (X) = g (X)\star q (X) + r (X)$  where $ deg (r (X)) <deg (g (X)) $. Then, 
		$ c (X) - g (X)\star q (X) = r (X)$. Since $ C $ is a $ \pi $-cyclic code and $ g (X) \in C $, then $ r (X) \in  C $. But $ g (X) $ is of minimal degree, so $ r(X)=0 $, Then ($2$) is done.
		\item Write $X^n-1 = g(X)\star q(X) + r(X)$ in $F[X]$, where $\deg r(X) < \deg g(X)$. In $ R_{\pi} $ we get  $g(X)\star q(X) = - r(X)\in C$, a contradiction (since $ g $ is of minimal degree), thus $ r(X)=0 $.
	\end{enumerate}
\end{proof}
\begin{theorem}\label{th 2:3}
	Let $ C $ be an $ [n,k]_q $ $ \pi $-cyclic code of type $ \pi=[m]^s $ and with generator polynomial $ g(X)=g_0(X)+X\star g_1(X)+\ldots+X^{\star r}\star g_r(X) $ where $ g_0,g_1,\ldots,g_r $ are polynomials in $ \frac{\mathbb{F}q[X]}{<X^n-1>}$.  Then 
	\begin{enumerate}
		\item  $ \dim{C}=k=s-r $.
		\item The matrix $G$ defined by   $ G=\left( \begin{array}{c}
			g(X)\\ 
			X\star g(X)\\ 
			\ldots\\
			X^{\star(s-1)}\star g(X)
		\end{array}\right)=$  
		{\scriptsize $ \left( \begin{array}{cccccccc}
				g_0(X)&g_1(X)&\ldots&g_r(X)&0&\ldots&0&0\\ 
				0&g_0(X)&g_1(X)&\ldots&g_r(X)&0&\ldots&0\\ 
				\ldots\\
				0&0&\ldots&0&g_0(X)&g_1(X)&\ldots&g_r(X)\\ 
			\end{array} \right) $} is a generator matrix of $C$.
	\end{enumerate}
\end{theorem}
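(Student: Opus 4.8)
The plan is to carry the classical proof for cyclic codes over to the $\star$-setting, using the dictionary that sends ordinary multiplication to $\star$, ordinary degree to $\star$-degree, and the length $n$ to the block-length $s$, and leaning throughout on Theorem~\ref{th 2:2}. I first record the normal form I will work with: every element of $R_\pi$ has a unique representative $\sum_{i=0}^{s-1}a_i(X)\star X^{\star i}$ with each $a_i(X)\in\mathbb{F}_q[X]$ of ordinary degree $<m$, because $X^{\star i}\star P(X)=X^{im}P(X)$ and $X^{\star s}\star P(X)=P(X)$ (the identities listed after Definition~\ref{def 2:11}). In particular every codeword has $\star$-degree at most $s-1$, and the row $X^{\star i}\star g(X)$ is simply $g(X)$ shifted cyclically by $i$ blocks.

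I would prove assertion (2) first and read off assertion (1) from it. That each row of $G$ belongs to $C$ is immediate: by the theorem characterising $\pi$-cyclic codes as ideals of $(R_\pi,+,\star)$, the code $C$ is an ideal containing $g(X)$, so $X^{\star i}\star g(X)\in C$ for every $i$. For spanning, take any $c(X)\in C$. By part (2) of Theorem~\ref{th 2:2}, $g(X)$ $\star$-divides $c(X)$, so the $\star$-Euclidean division used in that proof gives $c(X)=q(X)\star g(X)$ with a quotient $q(X)$ of $\star$-degree at most $(s-1)-r$. Writing $q(X)=\sum_{i=0}^{s-r-1}q_i(X)\star X^{\star i}$ then exhibits $c(X)=\sum_{i=0}^{s-r-1}q_i(X)\star\bigl(X^{\star i}\star g(X)\bigr)$ as a combination of the first $s-r$ shifts of $g$, that is, of the rows of $G$.

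Linear independence follows from the staircase shape of the expanded matrix. For $0\le i\le s-r-1$ the shift $X^{\star i}\star g(X)=X^{im}g(X)$ places the blocks $g_0(X),\dots,g_r(X)$ in block-positions $i,\dots,i+r$, and since $i+r\le s-1$ no wraparound occurs; hence the leading block of row $i$ lies strictly to the right of that of row $i-1$, and a nontrivial vanishing combination would leave the lowest surviving leading block uncancelled, a contradiction. Thus the $s-r$ rows are independent and generate $C$, which is assertion (2); counting them gives $\dim C=k=s-r$, which is assertion (1).

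The step I expect to be the main obstacle is purely the $\star$-arithmetic bookkeeping rather than any new idea. One must verify that the $\star$-division delivers the degree bound $\star\text{-}\deg(q)\le s-1-r$ (equivalently, that $\star$-degree behaves additively enough under $\star$ to force this), since this is exactly what guarantees that only the $s-r$ unwrapped shifts are needed and that the triangular pattern is genuine. I would also make precise the sense of ``dimension'' in assertion (1): the $s-r$ rows form a free generating set over the block coefficients, and the shifts $X^{\star i}\star g$ with $i\ge s-r$ wrap around and, through the relation $g(X)\star h(X)=X^n-1\equiv 0$ coming from part (3) of Theorem~\ref{th 2:2}, reduce to combinations of the retained rows; this reconciles the index $X^{\star(s-1)}\star g(X)$ appearing in the displayed definition of $G$ with the $s-r$ rows actually written out.
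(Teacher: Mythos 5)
Your proposal follows essentially the same route as the paper's proof: both rely on Theorem \ref{th 2:2} to write an arbitrary codeword as $c(X)=q(X)\star g(X)$ with the quotient's $\star$-degree bounded by $s-r-1$, decompose $q(X)$ into block form, and conclude that the shifts $X^{\star i}\star g(X)$ span $C$, giving $k=s-r$. The only divergence is that you supply a concrete staircase (echelon) argument for linear independence and explicitly reconcile the $s$-row display of $G$ with the $s-r$ rows of its expanded form, whereas the paper asserts independence in a single unconvincing sentence and passes over the row-count discrepancy; these are fillings-in of gaps in the paper's argument rather than a different approach.
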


\begin{proof}
	Let $ C $ be an $ [n,k]_q $ $ \pi $-cyclic code of type $ \pi=[m]^s $ and with generator polynomial $ g(X)=g_0(X)+X\star g_1(X)+\ldots+X^{\star r}\star g_r(X) $ where $ g_0,g_1,\ldots,g_r $ are polynomials in $ \frac{\mathbb{F}q[X]}{<X^m-1>}$, and let $G$ be the matrix defined in \ref{th 2:3}.
	
	Thanks to \ref{th 2:2}, $g(X)$ $\star$-divides $X^n-1$ in $\mathbb{F}_q[X]$, which ensures the existence of a monic $ h (X) $ such that $ g(X)\star h(X) = X^n - 1 $. Hence $ g_0 \neq 0 $. Besides, the rows of $ G $ are: $ g(X),X\star g (X),\ldots, X^{\star k-1}\star g(X)$ and $g(X)$ being the associated polynomial of an arbitrary codeword of $C$, then, the rows of $ G $ are linearly independent. Furthermore, by Theorem \ref{th 2:2} if $c(X) $ is the associated polynomial to a codeword $c$ of $C$, then $ c(X)=q(X)\star g (X) $ where $ q(X) $ is a polynomial such that $ deg (q (X)) <n-r $ (In fact $ deg (c (X)) <n $). Hence $ q(X) $ is of the form $$q(X)= q_0(X) + q_1(X)\star X+\ldots+ q_{s-r}(X)\star X^{\star (s-r-1)}$$ and 
	$$\begin{array}{ccc}
		c(X)&=& g(X)\star q_0(X) + g(X)\star q_1(X)\star X+\ldots+\\
		& & g(X)\star q_{s-r}(X)\star X^{\star( s-r-1)}.
	\end{array}$$
	Thus $c(X)$ is a linear combination of the rows of $ G $.
	Hence $ G $ is a generator matrix of $ C $.
\end{proof}
\begin{definition}\label{def 2:2}
	Let $ G $ and $ H $ respectively be the generator and the parity check matrix of a $ \pi $-cyclic code of type $ \pi=[m]^s $, we define $ G\star H^t $ by $ G\star H^t=G'.H^t$ where $ G' $ is the shifted matrix of $ G $ by $ m $ columns. We have $ G\star H^t=0 $.
\end{definition}
In the literature, a codeword $c$ of a linear code $C$ is encoded as $c.G$ where $G$ is a generator matrix of $C$. Hereafter, we presenting an algorithm to encode $\pi$-cyclic codes.

Let $ C $ be an $ [n,k]_q $ $ \pi $-cyclic code of type $ \pi=[m]^s $ and with generator polynomial $ g (X) $ then $ r = deg (g (X)) = s - k $ and the rows of $ G $ are $ g(X),X\star g (X),\ldots, X^{\star k-1}\star g(X)$.

Let $ u = (u_1, u_2, \ldots, u_k)\in\mathbb{F}_q^k $, then $ u (X) = u_1 + u_2\star X + \ldots + u_k\star X^{\star k-1} $ is the associated polynomial of $u$.

we have 
\begin{align*}
	u\star G &= u_1\star g(X) + u_2\star X \star g(X)+ \ldots + u_k\star X^{\star k-1}\star g(X)\\
	& = u (X)\star g (X).
\end{align*}

Since, $C$ is $\pi$-cyclic, then $u (X)\star g (X)\in C$. Thus, the polynomial message $u(x)$ is encoded as $ u (X)\star g (X) $.
\section{Quasi-cyclic LEB codes}\label{sect:3}

\subsection{Definition and Properties}
\begin{definition}\label{def 3:3}
	An LEB code $C$ of lenght $n=m\times s$ (where $s=p\times l$, $p$ and $l$ are positive integers greater than $1$.) is called an quasi-cyclic LEB (QCLEB) code of type $\pi=[m]^s$ and order $p$ iff any codeword which is cyclically block-right-shifted by $p$ blocks is again a codeword in $C$, which means that for all $c\in C, \ \sigma^p(c)\in C$. Such a code is represented by a parity check matrix $H$ consisting of $p\times m.p$ block-circulant matrices.
\end{definition}

\begin{result}\label{res:1}
	The product $G\star H^t$, will be defined this case by $ G\star H^t=G'.H^t$ where $ G' $ is the shifted matrix of $ G $ by $ mp $ columns. We have $ G\star H^t=0 $.
\end{result}

\begin{definition}
	Let $m$ and $p$  be two positive integers, and let $M$ be an $p\times mp$ matrix, $M$ is called a block-circulant matrix of order $p$ iff 
	$$ M=\begin{pmatrix}
		M_1&M_2&\ldots&M_p\\
		M_p&M_1&\ldots&M_{p-1}\\
		\vdots&\vdots&\vdots&\vdots\\
		M_2&\ldots&M_p&M_1
	\end{pmatrix}.$$
	
	where for all $i=1,\ldots,p$, $M_i\in\mathbb{F}_q^m$ is called a block of $mp$ elements.
\end{definition}
Let $M$ be a block-circulant matrix of order $p$, we can uniquely associate with $M$ a polynomial $M(X)$ of coefficient given by entries of the first rows of $M$. If $r=(M_1,M_2,\ldots,M_p)$ is the first rows of $M$, then the associate polynomial of $r$ is $$M(X)= \sum_{i=1}^{i=p} M_i(X)\star X^{\star i},$$ 
where $M_i(X)= \sum_{i=1}^{i=p}c_{i,j}.X^{j},$ for all $i=1,\ldots,p$, and $c_ij\in\mathbb{F}_q$. 
Adding or multiplying two  circulant-block matrices is equivalent to adding of multiplaying their their associated polynomials modulo $X^{\star p}-1$.
\begin{definition}
	The set of block-circulant matrices of order $p$ forms a ring isomorphic to the ring $\frac{\mathbb{F}_q[X]}{X^{\star p}-1}$.
\end{definition}
\begin{definition}[$\pi$-syndrome]
	Let $C$ be an $[n,k]_q$ LEB quasicode of type $\pi$, and let $H$ be a parity-check matrix of $C$. for any vector$e\in\mathbb{F}_q^{n-k}$, we define the $\pi$-syndrome of $y$ to be the vector $s_{\pi}=H\star y^{t}$, where $y^{t}$ is the transpose of the vector $y$.
\end{definition}
For any $s_{\pi} \in \mathbb{F}_q^{n-k}$ and parity-check matrix $H$, the set of vectors of $\mathbb{F}_q^n$ with syndrome $s_{\pi}$ is denoted by $S_H^{-1}(s_{\pi})=\left\{y \in \mathbb{F}_q^n: s=H\star y^T\right\}$. By definition, $S_H^{-1}(0)=\mathcal{C}$ for any parity-check matrix $H$ of $\mathcal{C}$. The vector space $\mathbb{F}_q^n / \mathcal{C}$ consists of all cosets $a+\mathcal{C}=\{a+c: c \in \mathcal{C}\}$ with $a \in \mathbb{F}_q^n$. There are exactly $q^{n-k}$ different cosets, each coset contain $q^k$ vectors, and form a partition of $\mathbb{F}_q^n$.

Finding a vector $y \in \mathbb{F}_q^n$ and its syndrome $s_{\pi}=H\star y^T$, decoding consists in finding a codeword $c \in \mathcal{C}$ closest to $y$ for the $\pi$-distance $(d_{\pi}(y, c) \leq t)$ or finding an error vector $e \in y+\mathcal{C}$ such that $w_{\pi}(e) \leq t$. In terms of algorithmic complexity, the corresponding decision problems are as follows:

\begin{definition}[Decisional Syndrome Decoding Problem (D-SDP)] Given a random matrix $H$, a syndrome $s_{\pi}$ and an integer $t>0$, determine if exists a vector $e$, with $w_{\pi}(e) \leq t$, such that $s_{\pi}=H\star e^T$.
\end{definition} 

\begin{definition}[Decisional Codeword Finding Problem (D-CFP)] Given a random matrix $H$ and an integer $t>0$, determine if exists a vector $e$, with $w(e) \leq t$, such that $H\star e^T=0$.
\end{definition}

\begin{definition}[quasi-cyclic LEB Syndrome Decoding Problem (QCLEBSDP)] Given a parity-check matrix $H$ of an QCLEB code, a syndrome $s_{\pi}$ and an integer $t>0$, find a vector $e$, with $w_{\pi}(e) \leq t$, such that $s=H\star e^T$.
\end{definition}

\begin{example}
	Let $C$ be the quasi-cyclic LEB code with parity-check matrix $H$ defined by 
	
	$$H=\left(\begin{array}{c|c}
		H_{1,1}  &  H_{1,2}\\\hline
		H_{2,1}  &  H_{2,2}\\\hline
		H_{3,1}  &  H_{3,2}\\\hline
		H_{4,1}  &  H_{4,2}
	\end{array}\right),$$
	
	where 
	$$ H_{1,1}=H_{2,2}=\left(\begin{array}{c|c}
		010  &  000\\\hline
		000  &  010
	\end{array}\right), $$
	
	$$H_{1,2}=H_{2,1}=\left(\begin{array}{c|c}
		000  &  000\\\hline
		000  &  000
	\end{array}\right),$$
	
	$$H_{3,1}=H_{3,2}=\left(\begin{array}{c|c}
		100  &  000\\\hline
		000  &  100
	\end{array}\right),$$ 
	
	$$H_{4,1}=\left(\begin{array}{c|c}
		000  &  001\\\hline
		001  &  000
	\end{array}\right),$$ 
	and
	$$H_{4,2}=\left(\begin{array}{c|c}
		000  &  101\\\hline
		101  &  000
	\end{array}\right),$$ 
	Using the result \ref{res:1}, we get 
	$$G=\left(\begin{array}{c|c|c|c}
		101 & 000 & 100 & 000  \\\hline
		100 & 000 & 101 & 000 
	\end{array}\right)$$ a generator matrix of $C$.
\end{example}
\begin{result}
	\label{res : 3}
	Let $C$ be an QCLEB code of type $\pi=[m]^{s} $ with length $n=m\times s$ (where $s=p\times l$, $p$ and $l$ are positive integers greater than $1$.) and dimension $k=p\cdot\left(n_0-r_0\right)$ ($n_0=l.m$), take a codeword $c=(c_1,\ldots,c_{l})$ of $C$, where $c_i\in\mathbb{F}_q^{mp}$ (for all $i=1,\ldots,l$). Then, the $\pi$-weight of $c$ is defined by the following: $$w_{\pi}=\sum_{i=1}^{l}w_{\pi}(c_i).$$
\end{result}

\subsection{Polynomial Representation}
\begin{theorem}
	\label{th 2:2:1}
	Let $C$ be an quasi-cyclic LEB code of type $\pi=[m]^s$ (where $s=l\times p\geq 1$) and index $p$ ($p$ an integer greater than $1$). Then similarly to LEB cyclic codes, $C$ is generated by a polynomial $g$ of the form $g(X)= \sum_{i=0}^{l-1} X^{\star pi}\star g_i(X)$ where $g_i(X)= \sum_{j=0}^{p-1}g_{ij}(X)\star X^{\star i}$, $g_{ij}(X)=\sum_{j=0}^{m-1}\alpha_jX^j$ and $\alpha_j\mathbb{F}_q$.
\end{theorem}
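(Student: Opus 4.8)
The plan is to transcribe the arguments behind Theorems \ref{th 2:2} and \ref{th 2:3} into the quasi-cyclic setting, the only structural change being that the governing symmetry is now $\star$-multiplication by $X^{\star p}$ rather than by $X$. First I would record the algebraic reformulation of Definition \ref{def 3:3}: the computation in the proof of the ideal theorem shows that the block shift $\sigma$ is exactly $\star$-multiplication by $X$, so $\sigma^{p}$ is $\star$-multiplication by $Y:=X^{\star p}$, and the quasi-cyclic condition $\sigma^{p}(c)\in C$ says precisely that $C$ is stable under $Y\star(\cdot)$. Since the identity $X^{\star s}\star P(X)=P(X)$ from Definition \ref{def 2:11} gives $Y^{\star l}=X^{\star pl}=X^{\star s}=\mathbf{1}^{\star}$, the subring $A=\mathbb{F}_q[Y]\subseteq R_{\pi}$ is a homomorphic image of $\mathbb{F}_q[Y]/\langle Y^{l}-1\rangle$, and $C$ becomes an $A$-submodule of $R_{\pi}$. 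This is the quasi-cyclic analogue of the statement ``$C$ is an ideal'' that drove the cyclic case.

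Next I would produce the generator $g$ by a minimality argument parallel to part (1) of Theorem \ref{th 2:2}. The quantity to minimise is the $\star$-degree taken with respect to $Y=X^{\star p}$, ties broken by the ordinary degree of the leading block polynomial, exactly as in the $\star$-degree definitions preceding this subsection. Taking $g$ to be a monic codeword of minimal such degree, I would run the Euclidean-division step: for any $c(X)\in C$, dividing $c$ by $g$ while respecting $\star$ and the $Y$-grading yields $c=q\star g+r$ with $r$ of strictly smaller degree; stability of $C$ under $A$ forces $r\in C$, and minimality gives $r=0$. This reduces $C$ to the $\star$-multiples of $g$ by elements of $A$, mirroring parts (2)--(3) of Theorem \ref{th 2:2}, and the generator matrix of Theorem \ref{th 2:3} then carries over with rows the $\star$-shifts $X^{\star pi}\star g$ folded into the block-circulant pattern of Result \ref{res:1}.

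The displayed shape of $g$ is then obtained purely by re-indexing. Any element of $R_{\pi}$ has its $n=mpl$ coordinates organised first into the $l$ orbits of the shift-by-$p$ action, giving the outer sum $\sum_{i=0}^{l-1}X^{\star pi}\star(\cdot)$; then into the $p$ blocks inside each orbit, giving $g_i(X)=\sum_{j=0}^{p-1}g_{ij}(X)\star X^{\star j}$; and finally into the $m$ coordinates inside each block, giving the ordinary polynomials $g_{ij}(X)=\sum_{r=0}^{m-1}\alpha_{r}X^{r}$. Substituting these into the previous step expresses $g$ in exactly the claimed nested form.

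The main obstacle is conceptual rather than computational: unlike a cyclic code, a quasi-cyclic code is in general \emph{not} a cyclic module, so a single $g$ cannot generate $C$ under $\star$-multiplication by arbitrary elements of $R_{\pi}$. One must therefore fix the meaning of ``generated by $g$'' to be generation as the $A$-module $C=A\star g$ (equivalently, that $G$ is the block-circulant matrix whose first row is $g$), and then verify that the minimal-degree $g$ really spans the whole module and not a proper submodule. This verification, together with pinning down precisely the well-ordering used for the Euclidean division, is the step where the quasi-cyclic argument genuinely departs from the cyclic proof and where I would expect the real work to lie.
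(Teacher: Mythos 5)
Your overall strategy is the same as the paper's: the paper also proves Theorem \ref{th 2:2:1} by replaying the cyclic argument of Theorems \ref{th 2:2} and \ref{th 2:3}, treating the $l$ super-blocks of size $mp$ as the blocks, identifying the index-$p$ shift with $\star$-multiplication, asserting that $C$ is an ideal of $R_{\pi}$, and then running the minimal-degree, Euclidean-division and re-indexing steps with $l$ in place of $s$. Your reformulation through the subring $A=\mathbb{F}_q[Y]$ with $Y=X^{\star p}$ is a sharper version of the first step, and your re-indexing paragraph matches the paper's block decomposition of $c(X)$.

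The gap you flag in your final paragraph, however, is genuine, and it cannot be closed: the single-generator claim is false for general quasi-cyclic LEB codes. Stability under $Y\star(\cdot)$ plus linearity makes $C$ an $A$-submodule of $\mathbb{F}_q^n$, as you say, but $\mathbb{F}_q^n$ is a \emph{free $A$-module of rank $mp$} (the shift by $mp$ positions splits the $n$ coordinates into $mp$ disjoint $l$-cycles), and submodules of a rank-$mp$ free module may need up to $mp$ generators. Concretely, take $q=2$, $m=1$, $p=l=2$: the code spanned by $1000$, $0010$ and $0101$ is quasi-cyclic of index $2$ in the sense of Definition \ref{def 3:3}, yet it is not an ideal of $R_{\pi}$ (it has dimension $3$ and contains odd-weight words, while the only $3$-dimensional ideal of $\mathbb{F}_2[X]/\langle X^4-1\rangle$ is the even-weight code), and it is not of the form $A\star g$ for any $g$ (each $A$-orbit spans at most $2$ dimensions). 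This is also exactly why your Euclidean division cannot be repaired: the coefficients in the $Y$-grading are polynomial blocks of degree less than $mp$, not field elements, so one cannot divide by the leading block of $g$ unless it is invertible; and if you force the quotient to lie in $A$ (so that $q\star g\in C$ follows from $A$-stability) you cannot in general match the leading block of $c$ at all. You should know that the paper's own proof does not perform this verification either: it concludes ``$C$ is an ideal of $R_{\pi}$'' from closure under the single shift plus additivity --- the same non sequitur as in the ideal characterisation of $\pi$-cyclic codes in Section 3, since closure under $\star$-multiplication by the powers of one element only gives closure under the proper subalgebra they span --- and then divides as if the full ideal property were available. So your diagnosis of where the real work lies is exactly right; a correct statement would have to restrict to one-generator quasi-cyclic LEB codes, or allow a generating set $g_1,\dots,g_r$ rather than a single $g$.
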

\begin{proof}
    Take $ c=(c_1,c_2,\ldots,c_l)$ a codeword of $ C $. Then  $\hat{c} =\sigma_{\pi}(c)=(c_l,c_1,\ldots,c_{l-1}) $ ($ \sigma_{\pi} $ is the QC-LEB shift of the codeword $c$ and is also a codeword of $ C$
and $$\begin{array}{ccl}
\hat{c}(X)&=& c_l(X)+X\star c_1(X)+\ldots + X^{\star (l-1)}\star c_{l-1}(X)\\
&=& c_l(X)+X\star c_1(X)+\ldots + X^{\star (l-1)}\star c_{l-1}(X)+ \\
& & X^{\star l}\star c_{l}(X)-X^{\star l}\star c_{l}(X)\\
&=& X\star(c_1(X)+X\star c_2(X)+\ldots + \\
& & X^{\star (l-1)}\star c_l(X))+c_l(X)-X^{\star l}\star c_{l}(X)\\
&=&  X\star c(X)+c_l(X).(1-X^n)\\
&=& X\star c(X)\in \  R_{\pi}.
\end{array}$$
Since $ \hat{c}(X)\in C $, then $ X\star c(X) \in C$. By definition, $ (C,+) $ is an abelian group, thus $ C $ is an ideal of $  R_{\pi} $.
Reciprocally, if $ C $ is an ideal of $  R_{\pi} $, then for all $ c(X) $ in $ C $,  $ r(X)\star c(X) $ is in $ C $ where $ r(X) $ is an arbitrary polynomial in $  R_{\pi} $. Let $ c(X)=\sum_{i=0}^{ml-1}\alpha_i X^i$ in $ C $ where $ \alpha_i\in\mathbb{F}_q $. Thus
$$\begin{array}{ccl}
c(X)&=& \sum_{i=0}^{m-1}\alpha_i X^i+\sum_{i=m}^{2m-1}\alpha_i X^i+\ldots+\\
& & \sum_{i=(l-1)m}^{n-1}\alpha_i X^i\\
&=& \sum_{i=0}^{m-1}\alpha_i X^i+X^m.\sum_{i=m}^{2m-1}\alpha_i X^{i-m}+\ldots \\ 
& & +X^{m(l-1)}.\sum_{i=(l-1)m}^{n-1}\alpha_i X^{i-m(l-1)}\\
&=& c_1(X)+X^m.c_2(X)+\ldots+X^{m(l-1)}.c_{l-1}(X) \\
& & \mbox{ by definition of the low } \star \mbox{ we get }\\
&=&c_1(X)+X\star c_2(X)+\ldots + X^{\star (l-1)}\star c_{l}(X).\\
\end{array}$$
where for all $ i=1,\ldots, l $: $$ c_i(X)=\sum_{j=(i-1)m}^{im-1} \alpha_{j-(i-1)m}X^{j-(i-1)m}.$$ The polynomial $ c(X) $ is associated with vector $ (c_1,c_2,\ldots,c_l) $. Now, if we multiply $ c(X) $ by $ X $ we get $$ X\star c(X)=c_l(X)+X\star c_1(X)+\ldots + X^{\star (l-1)}\star c_{l-1}(X) $$ which is a polynomial associated with the vector $$ (c_l,c_1,\ldots,c_{l-1}). $$ Thus, multiplying $ c(X) $ by $ X $ in $  R_{\pi} $ corresponds to a QC-LEB block shift. Finally, $ C $ is a QC-LEB code.\\
Let $ c (X) $ be a non-zero word of the code, then if we make the Euclidean division (respecting the law $ \star $) of $ c (X) $ by a unitary polynomial $ g (X) $, we obtain $ c (X) = g (X)\star q (X) + r (X)$  with $ deg (r (X)) <deg (g (X)) $. \\
Then,
$ c (X) - g (X)\star q (X) = r (X)$. Since $ C $ is a $ \pi $-cyclic and $ g (X) \in C $, then $ r (X) \in  C $. But as $ g (X) $ is of minimal degree, so $ r(X)=0 $, which proves the result.\\
Let us write $X^n-1 = g(X)\star q(X) + r(X)$ in $F[X]$, where $\deg r(X) < \deg g(X)$. In $ R_{\pi} $ this says $g(X)\star q(X) = - r(X)\in C$, a contradiction (since $ g $ is of minimal degree) unless $ r(X)=0 $. Thus $ g(X) $ $ \star $-divides $ X^{n}-1 $ in $ \mathbb{F}_q [X] $.\\
Let $c(X)\in C $, Then $ c(X)=q(X)\star g (X) $ where $ q(X) $ is a polynomial such that $ deg (q (X)) <n-r $ since $ deg (c (X)) <n $. Hence $ q(X) $ is of the form $$q(X)= q_0(X) + q_1(X)\star X+\ldots+ q_{l-r}(X)\star X^{\star (l-r-1)}$$ and
$$\begin{array}{ccl}
    c(X) &=& g(X)\star q_0(X) + g(X)\star q_1(X)\star X + \\
         &=& \ldots+ g(X)\star q_{l-r}(X)\star X^{\star( l-r-1)}
\end{array}$$
and $c(X)$ is a linear combination of $g_i(X)$. 
\end{proof}
\begin{result}
	\label{res : 4}
	An QC-LEB code $C$ of type $\pi=[m]^{s} $ with length $n=m\times s$ (where $s=p\times l$, $p$ and $l$ are positive integers greater than $1$.) is generated by a matrix $G$ of the form $$G=\left(\begin{array}{c}
		g(X) \\
		X^{\star p\times 1}\star g(X) \\
		\vdots\\
		X^{\star p\times (l-1)}\star g(X)=X^{\star(s-p)}\star g(X)
	\end{array}\right)$$
	In fact, for all $c(X)\in C$ we have $X^{\star p\times i}\star g(X)\in C$ where $i$ is a positive integer greater than $1$.
\end{result}

\begin{example}
	\label{exa 2:1}
	Take the LEB code generated by the matrix 
	$$G=\left(\begin{array}{c|c|c|c}
		101 & 000 & 100 & 000  \\\hline
		100 & 000 & 101 & 000 
	\end{array}\right).$$
	
	$C$ is an QCLEB code of type $\pi=[3]^4$ of index $2$ and dimension $k=2$ and lenght $n=12$.
	A generator polynomial of $C$ is $$g(X)= g_0(X)\star\textbf{1}^{\star}+   g_1(X)\star X^{\star 2}$$ where $g_0(X)= X^2\star X^{\star 1}$ and $g_1(X)= (X^2+1)\star X^{\star 1}$.
	
	Thus, 
	\begin{align*}
		X^{\star 2}\star g(X)&= g_0(X)\star X^{\star 2}+   g_1(X)\star X^{\star 4}\\
		&= g_0(X)\star X^{\star 2}+   g_1(X)\star \textbf{1}^{\star}
	\end{align*}
	
	and since $X^{\star 4}\star g(X)=g(X)$, then $$G=\left(\begin{array}{c}
		g(X)\\
		X^{\star 2}\star g(X)
	\end{array}\right)$$ is a generator matrix of $C$.
\end{example}

\begin{definition}
	Let $ v(X) \in\mathbb{F}_q[X]$ and $ g(X) $ is a polynomial generator of  An QCLEB code $C$ of type $\pi=[m]^{s} $ with length $n=m\times s$ (where $s=p\times l$, $p$ and $l$ are positive integers greater than $1$.), we define  $ R_{g(X)}(v(X)) $ to be the unique polynomial $ r(X) $ verifying \begin{equation}
		v(X)=g(X)\star f(X) +r(X). \label{equa 2:111}
	\end{equation}  with $ r(X)=0 $ or $ \star-deg(r(X))\leq \star-deg(g(X)) $. (The existence and the uniqueness of $ r(X) $ verifying \ref{equa 2:111} are due to the existence and the uniqueness of $ r(X) $ such that $ v(X)=g(X).X^{m-1}.f(X) +r(X)$, and  $ r(X)=0 $ or $ deg(r(X))< deg(g(X)) $).
\end{definition}

In the following properties satisfied by the function $ R_{g(X)} $.
\begin{theorem}\label{th 2:5}
	With the preceding notation the  following hold:
	\begin{enumerate}
		\item $ R_{g(X)}(av(X) + bv'(X)) = a R_{g(X)}(v(X)) + bR_{g(X)}(v'(X))$ for all $ v(X), v'(X)\in\mathbb{F}_q[X]$ and all $a, b\in\mathbb{F}_q[X].$
		\item   $ R_{g(X)}(v(X)+a(X)\star(X^n-1)) = R_{g(X)}(v(X)) $.
		\item  $ R_{g(X)}(v(X))=0 $ iff $ v(X) mod (X^{\star s}-\textbf{1}^{\star}) \in C. $
		\item  If $ c(X) \in C, $ then $ R_{g(X)}(c(X) + e(X)) = R_{g(X)}(e(X)) $.
		\item $R_{g(X)}(v(X)))=v(X)$ lf $ \star-deg(v(X))< \star-deg(g(X)) $.
	\end{enumerate}
\end{theorem}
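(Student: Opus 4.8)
The plan is to derive all five items from one structural fact: the $\star$-Euclidean division defining $R_{g(X)}$ is unique. Indeed, the definition identifies $v(X)=g(X)\star f(X)+r(X)$ with ordinary division of $v(X)$ by $X^{m-1}g(X)$, so for each $v(X)$ there is exactly one pair $(f(X),r(X))$ with $r(X)=0$ or $r(X)$ of $\star$-degree below that of $g(X)$. My strategy throughout is to exhibit, for the argument on the left-hand side, an explicit decomposition of this form and then read off $R_{g(X)}$ by uniqueness. Besides uniqueness I will use two divisibility facts from Theorem \ref{th 2:2}: that $g(X)$ $\star$-divides every codeword of $C$, and that $g(X)$ $\star$-divides $X^{n}-1$.

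Items $(5)$, $(4)$ and $(1)$ follow immediately. For $(5)$, if $v(X)$ has $\star$-degree strictly below that of $g(X)$, then $v(X)=g(X)\star 0+v(X)$ is already a valid division, so $R_{g(X)}(v(X))=v(X)$. For $(4)$, I write $c(X)=g(X)\star q(X)$ (possible by Theorem \ref{th 2:2}) and $e(X)=g(X)\star f(X)+R_{g(X)}(e(X))$; adding gives $c(X)+e(X)=g(X)\star\bigl(q(X)+f(X)\bigr)+R_{g(X)}(e(X))$, and uniqueness yields $R_{g(X)}(c(X)+e(X))=R_{g(X)}(e(X))$. For $(1)$, with $r=R_{g(X)}(v)$ and $r'=R_{g(X)}(v')$, distributivity of $\star$ over $+$ together with the compatibility $a\,(g\star f)=g\star(af)$ gives $av+bv'=g\star(af+bf')+(ar+br')$; when $a,b$ are scalars the combination $ar+br'$ still respects the degree bound, so uniqueness gives the claimed linearity.

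For $(2)$ I will use that, by Theorem \ref{th 2:2}, $X^{n}-1=g(X)\star h(X)$ for some $h(X)$; hence $a(X)\star(X^{n}-1)=g(X)\star\bigl(a(X)\star h(X)\bigr)$ is a $\star$-multiple of $g(X)$. Writing $v=g\star f+R_{g(X)}(v)$ and adding this multiple changes only the quotient, not the remainder, so $R_{g(X)}\bigl(v(X)+a(X)\star(X^{n}-1)\bigr)=R_{g(X)}(v(X))$; equivalently, $R_{g(X)}$ depends only on the class of $v(X)$ in $R_{\pi}$. For $(3)$, I note that $R_{g(X)}(v(X))=0$ means exactly $v(X)=g(X)\star f(X)$, i.e.\ $v(X)$ is a $\star$-multiple of $g(X)$; by the ideal characterization underlying Theorem \ref{th 2:2}, a class in $R_{\pi}$ lies in $C$ iff $g(X)$ $\star$-divides it, which is precisely the statement $v(X)\bmod(X^{\star s}-\textbf{1}^{\star})\in C$.

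The one genuine subtlety is item $(1)$ when $a,b$ are read as arbitrary polynomials of $\mathbb{F}_q[X]$ rather than as constants. Remainder maps are $\mathbb{F}_q$-linear but not $\mathbb{F}_q[X]$-linear: multiplying a reduced remainder $r$ by a nonconstant polynomial can raise its $\star$-degree above that of $g(X)$, so $ar+br'$ need no longer be the valid remainder and the uniqueness argument stalls. I would resolve this by reading $(1)$ as $\mathbb{F}_q$-linearity, or, if polynomial coefficients are genuinely intended, by replacing the ordinary products with $\star$-products and proving instead $R_{g(X)}(a\star v+b\star v')=R_{g(X)}\bigl(a\star R_{g(X)}(v)+b\star R_{g(X)}(v')\bigr)$, using that $a\star\bigl(v-R_{g(X)}(v)\bigr)\in C$. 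Fixing this interpretation is the only delicate step; the remaining items are routine consequences of uniqueness together with the $\star$-divisibility of $g(X)$.
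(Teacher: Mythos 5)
Your proof is correct and rests on the same backbone as the paper's: the uniqueness of $\star$-Euclidean division (reduced to ordinary division by $X^{m-1}g(X)$) together with the two divisibility facts from Theorem \ref{th 2:2}. The differences are worth recording. First, you derive items $(2)$ and $(4)$ directly from uniqueness of the decomposition, whereas the paper derives both as corollaries of item $(1)$ (it writes $R_{g(X)}(v+a\star(X^n-1))=R_{g(X)}(v)+R_{g(X)}(a\star(X^n-1))$ and $R_{g(X)}(c+e)=R_{g(X)}(c)+R_{g(X)}(e)$ and then kills the extra term); both routes work, since the paper only invokes the additivity part of $(1)$, which is sound. Second, and more importantly, the subtlety you flag in item $(1)$ is a genuine gap in the paper's own proof: the paper sets $r_1=R_{g(X)}(av)$, $r_2=R_{g(X)}(bv')$, correctly concludes $R_{g(X)}(av+bv')=r_1+r_2$ (i.e.\ additivity), and then silently writes $r_1+r_2=aR_{g(X)}(v)+bR_{g(X)}(v')$, which assumes the homogeneity $R_{g(X)}(av)=aR_{g(X)}(v)$. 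For nonconstant $a\in\mathbb{F}_q[X]$ this fails for exactly the degree reason you give: $aR_{g(X)}(v)$ need not satisfy the remainder degree bound, so it need not be the remainder of $av$. Your two proposed repairs (read $a,b$ as scalars in $\mathbb{F}_q$, or restate the identity with $\star$-products and reduce again) are both reasonable; the paper does neither, so on this point your treatment is strictly more careful than the published proof.
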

\begin{proof}
	\begin{enumerate}
		\item Let $ v(X), v'(X)\in R_{\pi}$ and  $a, b\in\mathbb{F}_q[X]$, and set $r_1(X)=R_{g(X)}(av(X))$ and $r_2(X)=R_{g(X)}(bv'(X))$. Then there are two polynomials $ f_1(X) $ and $ f_2(X) $ in $ R_{\pi} $ such that $ av(X)=f_1(X)\star g(X)+r_1(X) $ and $ bv'(X)=f_2(X)\star g(X)+r_2(X) $ and $ \star-deg(r_i(X))<  \star-deg(g(X))$ or $ r_i(X)=0 $ for $ i=1,2 $. Therefore, $ av(X)+bv'(X)=(f_1(X)+f_2(X))\star g(X)+(r_1+r_2)(X)$. Since 
		
		$ \star-deg(r_1(X)+r_2(X))\leq max(r_1(X),r_2(X))\leq l-k, $
		then \begin{align*}
			R_{g(X)}(av(X) + bv'(X))&=(r_1+r_2)(X)\\
			&=a R_{g(X)}(v(X)) \\
            &+bR_{g(X)}(v'(X)).
		\end{align*} 
		\item According to the result above, $ R_{g(X)}(v(X)+a(X)\star(X^n-1)) = R_{g(X)}(v(X)) + R_{g(X)}(a(X)\star(X^n-1))$. Since $ g(X) $ $ \star $-divides $X^{\star s}-\textbf{1}^{\star}$, then  $ R_{g(X)}(a(X)\star(X^{\star s}-\textbf{1}^{\star}))=0$, and $ R_{g(X)}(v(X)+a(X)\star(X^n-1)) = R_{g(X)}(v(X))$
		\item We have \begin{align*}
			v(X) mod (X^{\star s}-\textbf{1}^{\star}) \in C\Leftrightarrow & v(X)\ \star-divides\ g(X)\\
			&\Leftrightarrow R_{g(X)}(v(X))=0
		\end{align*} 
		\item Let $ c(X) \in C$, then $ R_{g(X)}(c(X))=0$ and then
		\begin{align*}
			R_{g(X)}(c(X) + e(X)) &= R_{g(X)}(c(X)) +R_{g(X)}(e(X))\\
			&=R_{g(X)}(e(X))
		\end{align*}
		\item lf $ \star-deg(v(X))<  \star-deg(g(X)) $ then $ v(X)=0\star g(X)+v(X) $, and $R_{g(X)}(v(X)))=v(X)$.
	\end{enumerate}
\end{proof}
\begin{definition}Let $ C $ be an QCLEB code of type $\pi=[m]^{s} $ with length $n=m\times s$ (where $s=p\times l$, $p$ and $l$ are positive integers greater than $1$.).
	We define the syndrome polynomial $ S_{\pi}(v(X)) $ of any polynomial $ v(X)\in R_{\pi} $ to be
	$ S_{\pi}(v(X)) = R_{g(X)}(v(X)) $.	
\end{definition}
\begin{example}
	\label{exa 2:2}
	For the code defined in the example \ref{exa 2:1}. Suppose that we have sent the word $c(X)= (x^2\star X^{\star 1})\star \textbf{1}^{\star}+((X^2+1)\star X^{\star 1})\star X^{\star 2}$ and that we have receive the word $y(X)=(x^2\star X^{\star 1})\star \textbf{1}^{\star}+(X^2\star \textbf{1}^{\star}+ (X^2+1)\star X^{\star 1})\star X^{\star 2}$. Then the syndrome polynomial of $y(X)$ is :  $s_{\pi}(y(X))=R_{g(X)}=(X^2\star \textbf{1}^{\star})\star X^{\star 2}$
\end{example}
\section{Decoding  quasi-cyclic LEB Codes}

\subsection{NP-completeness of decoding QC LEB codes}
According to the section \ref{sect:3}, the parity-check matrix of an QCLEB code of type $\pi=[m]^{s} $ ($s=p\times l$ with $l$ and $p$ are two positive integer greater than $1$) with length $n=m\times s$ and dimension $k=p \cdot\left(n_0-r_0\right)$ (with $n_0=mp$) is seen as follows
$$
\left(\begin{array}{c|c|c|c}
	H_{1,1} & H_{1,2} & \ldots & H_{1,l} \\
	H_{2,1} & H_{2,2}& \ldots & H_{2,l} \\
	\vdots & \vdots & \ddots & \vdots \\		
	H_{r_0,1}&H_{r_0,2}&\ldots&H_{r_0,l}
\end{array}\right)
$$
where each matrix $H_{i,j}$ with $1 \leq i \leq r_0$ and $1 \leq j \leq l$, is a $r_0\times n_0$ block-circulant matrix, that is, a matrix of the form
$$
\left(\begin{array}{c|c|c|c}
	a_1 & a_2 & \ldots & a_{p} \\
	a_{p} & a_1 & \ldots & a_{p-1} \\
	\vdots & \vdots & \ddots & \vdots \\
	a_2 & a_3 & \ldots & a_1
\end{array}\right)
$$
with $a_i \in \mathbb{F}_q^m, 1 \leq i \leq p$.

A block-circulant matrix of size $p\times m.p$ is associated through an isomorphism to a polynomial in $x$ with coefficients over $\mathbb{F}_q^m$ given by the elements of the first row of the matrix:
$$
a(x)=\sum_{i=0}^{p-1} a_{i+1}\star x^{\star i}
$$
\begin{theorem}
	\label{thm 3:3} let $p$, $m$ and $l$ three positive integers greater than $1$. 
	Take any $l$ matrices each of size $r_0\times n_0$ where $n_0=p\times m$. The parity-check matrix 
	
	\begin{equation}\label{equa 4:0}
		H=\left(\begin{array}{c|c|c|c}
			H_{1} & H_{2} & \ldots & H_{l} \\
			H_{2} & H_{3}& \ldots & H_{p-2} \\
			\vdots & \vdots & \ddots & \vdots \\		
			H_{2}&H_{3}&\ldots&H_{1}
		\end{array}\right)
	\end{equation}	
	
	defines a QCLEB code with lenght $n=p\times n_0$ and dimension $k=p.(n_0-r_0)$.
\end{theorem}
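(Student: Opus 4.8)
The plan is to analyze the code $C=\{c\in\mathbb{F}_q^n:\ H\star c^t=0\}$ cut out by the $\pi$-syndrome map attached to $H$, and to verify in turn the three assertions hidden in the statement: that $C$ is $\mathbb{F}_q$-linear of length $n=p\,n_0$, that it is stable under the block shift $\sigma^p$ (so that it is genuinely a QCLEB code of order $p$ in the sense of Definition \ref{def 3:3}), and that $\dim_{\mathbb{F}_q}C=p(n_0-r_0)$. The first point is immediate: since $\star$-multiplication is $\mathbb{F}_q$-linear in its second argument, $c\mapsto H\star c^t$ is an $\mathbb{F}_q$-linear map, so its kernel $C$ is a subspace; and the length is read off by counting the block-columns of $H$ in \eqref{equa 4:0}, each of width $n_0=pm$, giving $n=p\,n_0$.

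First I would treat the quasi-cyclic property, which is the structural heart of the statement. The key point is that $\sigma^p$ acts on associated polynomials as $\star$-multiplication by $X^{\star p}$: shifting a codeword by one block is multiplication by $X$ under $\star$ (exactly as in the proof of Theorem \ref{th 2:2:1}), so shifting by $p$ blocks is multiplication by $X^{\star p}$. The block-circulant shape of $H$ in \eqref{equa 4:0} is precisely the matrix incarnation of this ring action: by the isomorphism between block-circulant matrices of order $p$ and $\frac{\mathbb{F}_q[X]}{X^{\star p}-\textbf{1}^{\star}}$, multiplying a codeword by $X^{\star p}$ permutes the block-columns of $H$ in a way compensated by a cyclic permutation of its block-rows, under which $H$ is invariant. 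Using commutativity and associativity of $\star$ from Proposition \ref{prop 2:0}, one gets $H\star\big(X^{\star p}\star c\big)^t=X^{\star p}\star\big(H\star c^t\big)$; hence $H\star c^t=0$ forces $H\star\sigma^p(c)^t=0$, so $\sigma^p(c)\in C$ for every $c\in C$. This is exactly the defining condition of a QCLEB code of order $p$.

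The hard part will be the dimension count. By rank--nullity for the $\mathbb{F}_q$-linear syndrome map, $\dim_{\mathbb{F}_q}C=n-\mathrm{rank}(H)$, where the $\star$-rank coincides with the ordinary rank of $H$ since $\star$ differs from the usual product only by the unit factor $X^{m-1}$ in $\frac{\mathbb{F}_q[X]}{\langle X^n-1\rangle}$. The target $k=p(n_0-r_0)$ is thus equivalent to $H$ having \emph{full rank} $p\,r_0$. To see this I would pass to the polynomial picture: each block corresponds to a polynomial and, over a splitting field of $X^{\star p}-\textbf{1}^{\star}$, the block-circulant $H$ is conjugate to the block-diagonal matrix obtained by evaluating the associated $r_0\times n_0$ polynomial matrix at the $p$ roots of $X^{\star p}-\textbf{1}^{\star}$; the rank of $H$ is then the sum of the ranks of these $p$ evaluation blocks, so $\mathrm{rank}(H)=p\,r_0$ precisely when each block has rank $r_0$. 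Recording this full-rank condition on the $H_i$ as the standing hypothesis, one obtains $\dim_{\mathbb{F}_q}C=p\,n_0-p\,r_0=p(n_0-r_0)$, which together with the length and the $\sigma^p$-stability establishes the theorem. This simultaneous full rank of the evaluation blocks is the only place genericity enters and is the main obstacle; everything else follows formally from Proposition \ref{prop 2:0} and Result \ref{res : 4}.
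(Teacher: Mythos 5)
Your proposal is correct in substance but takes a genuinely different route, and it is in fact more complete than the paper's own proof. For quasi-cyclicity the paper argues by explicit bookkeeping: it writes out the system $H\star c^{t}=0$ block-row by block-row, observes that each equation of $H\star(c^{(1)})^{t}=0$ coincides with an equation of $H\star c^{t}=0$ (equation $1$ with equation $l$, equation $i$ with equation $i-1$), and propagates this pattern inductively through all shifts $c^{(x)}$, $1\le x\le l-1$. You instead note that the block-circulant $H$ represents multiplication by a ring element, so the syndrome map satisfies $H\star(X^{\star p}\star c)^{t}=X^{\star p}\star(H\star c^{t})$ and its kernel is a submodule, hence shift-stable; this is shorter and explains \emph{why} the paper's permutation pattern must occur, at the cost of leaning on the matrix--polynomial isomorphism, whereas the paper's version is elementary and self-contained. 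The more significant divergence is the dimension claim $k=p(n_0-r_0)$: the paper's proof is entirely silent on it, while you correctly reduce it by rank--nullity to $\operatorname{rank}(H)=p\,r_0$ and observe that this is a genericity condition which ``any $l$ matrices'' cannot guarantee (take all $H_i=0$: the code is then all of $\mathbb{F}_q^{n}$, of dimension $n$, not $p(n_0-r_0)$). So the full-rank hypothesis you add is not a defect of your argument but a repair of the statement, and flagging it is a genuine contribution relative to the paper. Two caveats on your rank computation: the block-diagonalization of a block-circulant over a splitting field requires the circulant order to be coprime to $q$, so that the relevant polynomial $X^{\star p}-\textbf{1}^{\star}$ is separable --- you should state this; and since a hypothesis must be added in any case, it is cleaner simply to postulate $\operatorname{rank}(H)=p\,r_0$ outright and dispense with the splitting field altogether.
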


\begin{proof}
	Obviously,  $$H= 
	\left(\begin{array}{c|c|c|c}
		H_{1} & H_{2} & \ldots & H_{l} \\
		H_{2} & H_{3}& \ldots & H_{p-2} \\
		\vdots & \vdots & \ddots & \vdots \\		
		H_{2}&H_{3}&\ldots&H_{1}
	\end{array}\right)
	$$
	is an $r\times n$ matrix.

	Given a generic codeword $c$ of length $n$, it can be seen as 
	$c=\left(c_{1}, c_{2}, \ldots, c_{l}\right)$ where each $c_{i}, 1 \leq i \leq l$ has size $n_{0}$. By definition of an QCLEB code of index $p$, we have 
	$H\star c^{t}=0$, which is equivalent to $H.(c')^{t} =0$ where $c'$ is the shifted vector of $ G $ by $ n_0=mp $ positions,   that is:
	$$\begin{gathered}
		H_{1}.c_{l}^{t}+H_{2}.c_{1}^{t}+ \cdots +H_{l}.c_{l-1}^{t}=0 \\
		H_{l} \cdot c_{l}^{t}+H_{1} \cdot c_{1}^{t}+\cdots+H_{l-1} \cdot c_{l-1}^{t}=0 \\
		\vdots \\
		H_{2} \cdot c_{l}^{t}+H_{3} \cdot c_{1}^{t}+\cdots+H_{1} \cdot c_{l-1}^{t}=0
	\end{gathered}$$
	
	Denoting by $c^{(x)}, 1 \leq x \leq l-1$, the right cyclic shift of $c$ in $x \cdot n_{0}$ positions and from the definition of QCLEB code, we have that $c^{(x)}$ is a codeword too. Therefore, the equations
	
	\begin{equation}
		H\star (c^{(x)})^{t}=0, \  1 \leq x \leq l-1. \label{equa 4:1}
	\end{equation}
	
	must be satisfied. The equations $H \star\left(c^{(1)}\right)^{t}$ are
	
	$$\begin{gathered}
		H_{1}.c_{l-1}^{t}+H_{2}.c_{l}^{t}+H_{3}.c_{1}^{t}+ \cdots +H_{l}.c_{l-2}^{t}=0 \\
		H_{l}.c_{l-1}^{t}+H_{1}.c_{l}^{t}+H_{2}.c_{1}^{t}+ \cdots +H_{l}.c_{l-2}^{t}=0 \\
		\vdots \\
		H_{2}.c_{l-1}^{t}+H_{3}.c_{l}^{t}+H_{4}.c_{1}^{t}+ \cdots +H_{1}.c_{l-2}^{t}=0 
	\end{gathered}$$
	
	where the following pattern is observed:
	
	\begin{itemize}
		\item Equation 1 of $H \star\left(c^{(1)}\right)^{t}$ is the equation $l$ of $H\star c^{t}=0$.
		\item Equation 2 of $H \star\left(c^{(1)}\right)^{t}$ is the equation 1 of $H\star c^{t}=0$, and so on.
	\end{itemize}
	
	That is, all the equations of $H \star\left(c^{(1)}\right)^{t}$ coincide with an equation of $H\star c^{t}=0$, therefore, it is satisfied that $H \star\left(c^{(1)}\right)^{t}=0$. Let us now consider the equations $H \star\left(c^{(2)}\right)^{t}$
	
	$$
	\begin{gathered}
		H_{1} \cdot c_{l-2}^{t}+H_{2} \cdot c_{l-1}^{t}+\cdots+H_{l} \cdot c_{l-3}^{t} \\
		H_{l} \cdot c_{l-2}^{t}+H_{1} \cdot c_{l-1}^{t}+\cdots+H_{l-1} \cdot c_{l-3}^{t} \\
		\vdots \\
		H_{2} \cdot c_{l-2}^{t}+H_{3} \cdot c_{l-1}^{t}+\cdots+H_{1} \cdot c_{l-3}^{t}
	\end{gathered}
	$$
	
	Similarly, the following pattern is observed:
	
	\begin{itemize}
		\item Equation 1 of $H \star\left(c^{(2)}\right)^{t}$ is the equation $l$ of $H \star\left(c^{(1)}\right)^{t}=0$.
		\item Equation 2 of $H \star\left(c^{(2)}\right)^{t}$ is the equation 1 of $H \star\left(c^{(1)}\right)^{t}=0$, and so on.
	\end{itemize}
	
	All the equations of $H \star\left(c^{(2)}\right)^{t}$ coincide with an equation of $H \star\left(c^{(1)}\right)^{t}=$ 0 , therefore $H \star\left(c^{(2)}\right)^{t}=0$. Applying the same reasoning for all $x, 1 \leq x \leq l-1$ in $H \star\left(c^{(x)}\right)^{t}$, it can be seen that the general pattern is as follows:
	
	\begin{itemize}
		\item Equation 1 of $H \star\left(c^{(x)}\right)^{t}$ is the equation $l$ of $H \star\left(c^{(x-1)}\right)^{t}$.
		\item Equation $i, 2 \leq i \leq l$, of $H \star\left(c^{(x)}\right)^{t}$ is the equation $i-1$ of $H \star\left(c^{(x-1)}\right)^{t}$.
	\end{itemize}
	
	Therefore, the equations \ref{equa 4:1} are satisfied and the code is QCLEB.
\end{proof}
The converse of the previous theorem also holds.

\begin{theorem}
	If $C$ is an QCLEB code of length $n=l \cdot n_{0}$ and dimension $k=l \cdot\left(n_{0}-r_{0}\right)$, then the matrix
	$$H=\left(\begin{array}{cccc}
		H_{1} & H_{2} & \ldots & H_{l} \\
		H_{l} & H_{1} & \ldots & H_{l-1} \\
		\vdots & \vdots & \ddots & \vdots \\
		H_{2} & H_{3} & \cdots & H_{1}
	\end{array}\right)$$
	is a parity-check matrix for $C$, where each matrix $H_{i}, 1 \leq i \leq l$ has size $r_{0} \times n_{0}$.  
\end{theorem}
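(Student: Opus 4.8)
The plan is to prove this converse by duality: I will show that the dual code $C^{\perp}$ (for the standard $\mathbb{F}_q$-bilinear form on $\mathbb{F}_q^{n}$) is itself a QCLEB code of the same index $p$, and then feed $C^{\perp}$ into the structural results already established for QCLEB codes to obtain a generator matrix of $C^{\perp}$ in block-circulant form. Since a generator matrix of $C^{\perp}$ is precisely a parity-check matrix of $C$, this delivers the required block-circulant $H$. As a preliminary I would record that, by Definition \ref{def 3:3}, $C$ is stable under the block shift $\sigma^{p}$, which cyclically permutes the $l$ super-blocks of length $n_{0}=mp$; because $\sigma^{p}$ is a bijection of finite order $l$ acting on the finite set $C$, stability upgrades to $\sigma^{p}(C)=C$, and $C$ is likewise stable under $\sigma^{-p}=\sigma^{p(l-1)}$.

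The key step is that $C^{\perp}$ is $\sigma^{p}$-invariant. Since $\sigma^{p}$ merely permutes coordinates it is orthogonal, so $\langle \sigma^{p}(u),\sigma^{p}(v)\rangle=\langle u,v\rangle$. For any $c\in C$ and $h\in C^{\perp}$ one writes $\langle c,\sigma^{p}(h)\rangle=\langle \sigma^{-p}(c),h\rangle$, and $\sigma^{-p}(c)\in C$ by the preliminary step, so this vanishes; hence $\sigma^{p}(h)\in C^{\perp}$. Thus $C^{\perp}$ is a QCLEB code of index $p$, of dimension $n-k=l\,r_{0}$.

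I would then apply the generator structure for QCLEB codes (Result \ref{res : 4}, together with Theorem \ref{thm 3:3}) to $C^{\perp}$: being QCLEB of index $p$ and dimension $l\,r_{0}$, it is generated by $r_{0}$ seed rows and their block-cyclic shifts by multiples of $n_{0}$. Assembling these rows produces a matrix $H$ whose block-rows are the successive $\sigma^{p}$-shifts of the first, that is, exactly the $l\times l$ block-circulant array of the statement, with each circulant block $H_{i}$ of size $r_{0}\times n_{0}$. Because the rows of $H$ form a basis of $C^{\perp}$ and $\sigma^{p}(C)=C$, the $\star$-syndrome condition $H\star c^{t}$ reduces (via Result \ref{res:1}) to the ordinary product $H\cdot(c')^{t}$ on a shifted word $c'$, which vanishes precisely for $c\in(C^{\perp})^{\perp}=C$. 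Hence $H$ is a parity-check matrix for $C$ in the required form.

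The main obstacle I anticipate is the bookkeeping that reconciles the $\star$-based syndrome relation with ordinary orthogonality: the identity $G\star H^{t}=G'\cdot H^{t}$ of Result \ref{res:1} inserts a shift by $n_{0}$ columns, so one must use the invariance $\sigma^{p}(C)=C$ to argue that the $\star$-dual and the standard dual define the same code. One must also verify that the index-$p$ module structure of $C^{\perp}$ forces exactly $r_{0}$ independent seed rows, so that every $H_{i}$ is genuinely $r_{0}\times n_{0}$ and the full matrix has the claimed $l\,r_{0}$ rows. A secondary point is the full row rank of $H$, which follows from $\dim C^{\perp}=l\,r_{0}$ once the seed rows are shown to be independent.
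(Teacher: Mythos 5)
Your route is genuinely different from the paper's: the paper takes an arbitrary parity-check matrix $H'$ of $C$, writes it in $l\times l$ blocks, and compares the equations of $H'\star c^{t}=0$ with those of $H'\star (c^{(x)})^{t}=0$ for the shifted codewords in order to force the block-circulant pattern; you instead dualize. Your first half is the cleanest part of either argument: since $\sigma^{p}$ is a coordinate permutation of finite order, stability does upgrade to $\sigma^{p}(C)=C$, and $\langle c,\sigma^{p}(h)\rangle=\langle\sigma^{-p}(c),h\rangle$ correctly yields that $C^{\perp}$ is again QCLEB of index $p$ and dimension $lr_{0}$. That duality lemma is correct and is not in the paper.

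The gap is in the second half, and it is not bookkeeping. Your claim that $C^{\perp}$ ``is generated by $r_{0}$ seed rows and their block-cyclic shifts by multiples of $n_{0}$,'' with all $lr_{0}$ shifts linearly independent, says precisely that $C^{\perp}$ is a \emph{free} module of rank $r_{0}$ over $\mathbb{F}_q[Y]/(Y^{l}-1)$, where $Y$ acts as the shift by $n_{0}$ positions. Nothing in the definition of a QCLEB code gives this; moreover Theorem \ref{th 2:2:1} and Result \ref{res : 4}, which you invoke, assert generation by a \emph{single} polynomial $g$ and its shifts, so they produce $l$ rows rather than the $lr_{0}$ rows you need, and in any case they cannot supply freeness. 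In fact no argument can close this gap, because the statement fails without a freeness hypothesis: take $q=2$, $m=1$, $p=l=2$ (so $n_{0}=2$, $n=4$) and $C=\langle(1,0,1,0),(0,1,0,1)\rangle$. This is QCLEB (every codeword is fixed by the shift by $n_{0}=2$ coordinates), $k=2=l(n_{0}-r_{0})$ with $r_{0}=1$, and $C^{\perp}=C$. Any matrix of the claimed shape $\bigl(\begin{smallmatrix}H_{1}&H_{2}\\ H_{2}&H_{1}\end{smallmatrix}\bigr)$ whose rows lie in $C^{\perp}$ has both rows of the form $(a,b,a,b)$, forcing $H_{1}=H_{2}$; its two rows then coincide, its rank is at most $1$, and its kernel strictly contains $C$, so it is not a parity-check matrix of $C$. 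The module-theoretic reason is that $C^{\perp}\cong\bigl(R/(Y+1)\bigr)^{2}$ over $R=\mathbb{F}_2[Y]/(Y^{2}-1)$: it needs two module generators, and no single seed row's shifts span it. (The paper's own proof stumbles at the same spot: from the fact that the systems $H'\star c^{t}=0$ and $H'\star(c^{(1)})^{t}=0$ have the same solution set, it concludes the entry-wise equalities $H_{1,1}=H_{2,2}$, etc., which equal solution sets do not imply.) So your proposal, like the theorem itself, can only be salvaged by adding the hypothesis that $C$ (equivalently $C^{\perp}$) is free as an $\mathbb{F}_q[Y]/(Y^{l}-1)$-module; under that hypothesis your duality argument plus the seed-row construction does go through.
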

\begin{proof}
	Let $H^{\prime}$ be a parity-check matrix of a QC code of length $n=p \cdot n_{0}$ and dimension $k=p \cdot\left(n_{0}-r_{0}\right)$. Then $H^{\prime}$ can be written as follows
	
	$$
	H^{\prime}=\left(\begin{array}{cccc}
		H_{1,1} & H_{1,2} & \ldots & H_{1,l} \\
		H_{2,1} & H_{2,2} & \ldots & H_{2,l} \\
		\vdots & \vdots & \ddots & \vdots \\
		H_{l,1} & H_{l,2} & \ldots & H_{l,l}
	\end{array}\right)
	$$
	
	where each matrix $H_{i j}, 1 \leq i, j \leq l$ has size $r_{0} \times n_{0}$. Since $C$ is an QCLEB code, for any codeword $c=\left(c_{1}, c_{2}, \ldots, c_{l}\right)$, the equalities
	
	$$
	H^{\prime} \star\left(c^{(x)}\right)^{t}=0, 1 \leq x \leq l-1
	$$
	
	are satisfied. 
	
	Note that the equation \ref{equa 4:0} in $H^{\prime} \star c^{t}=0$ is the same as \ref{equa 4:1} in $H^{\prime} \star\left(c^{(1)}\right)^{t}=0$. Then is satisfied:
	
	$$
	\begin{gathered}
		H_{1,1}=H_{2,2} \\
		H_{1,2}=H_{2,3} \\
		H_{1,3}=H_{2,4} \\
		\vdots \\
		H_{1,l}=H_{2,1}
	\end{gathered}
	$$
	
	Similarly, we can consider the equation $3$ in $H^{\prime} \star\left(c^{(2)}\right)^{t}=0$ and equation $4$ in $H^{\prime} \star\left(c^{(3)}\right)^{t}=0$. This reasoning is extended to consider the equation $p$ in $H^{\prime} \star\left(c^{(l-1)}\right)^{t}=0$ and it is obtained that
	
	$$
	\begin{gathered}
		H_{1,1}=H_{2,2}=H_{3,3}=\cdots=H_{l,l)} \\
		H_{1,2}=H_{2,3}=H_{3,4}\cdots=H_{l,1} \\
		H_{1,3}=H_{2,4}==H_{3,5}\cdots=H_{l,2} \\
		\vdots \\
		H_{1,l}=H_{2,1}=H_{3,2}=\cdots=H_{l\times (l-1)}
	\end{gathered}
	$$
	The previous equalities then mean that $H^{\prime}$ has the form (1), that is, $H^{\prime}=H$.    
\end{proof}

From the previous theorems, it is not difficult to deduce the proof of the following result.
\begin{theorem}
	The $Q C$-SDP is $N P$-complete.
\end{theorem}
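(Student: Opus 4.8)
The plan is to establish the two halves of NP-completeness separately: first that the QC-SDP belongs to NP, and then that it is NP-hard by exhibiting a polynomial-time reduction from a problem already known to be NP-complete, using the two structure theorems just proved as the engine of the reduction.

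For membership in NP, I would take a proposed error vector $e$ as the certificate. Given an instance $(H, s_{\pi}, t)$ of the QC-SDP, checking $e$ requires only computing the $\pi$-syndrome $H\star e^{t}$ through the block-circulant $\star$-multiplication and comparing it with $s_{\pi}$, and computing $w_{\pi}(e)$ by counting nonzero blocks (Result \ref{res : 3}) to test $w_{\pi}(e)\leq t$. By Result \ref{res:1} the product $H\star e^{t}$ equals an ordinary matrix product $H'.e^{t}$ after a fixed block shift, so both the syndrome evaluation and the block-weight count run in time polynomial in $n=m\times s$. Hence verification is polynomial and QC-SDP is in NP.

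For NP-hardness, the idea is to reduce the general (unstructured) syndrome decoding problem, that is the Decisional Syndrome Decoding Problem (D-SDP) stated above, whose NP-completeness is the classical base case, to the QC-SDP. Theorem \ref{thm 3:3} shows that any choice of seed blocks $H_{1},\ldots,H_{l}$ yields a genuine QCLEB code, and its converse guarantees that every QCLEB parity-check matrix arises this way; together they give me the freedom to plant an arbitrary instance inside the quasi-cyclic frame. Concretely, given a general instance with an $r_0\times n_0$ parity-check matrix $A$, target syndrome $\sigma$ and bound $t$, I would set the first seed block $H_{1}=A$ and all remaining seeds $H_{2}=\cdots=H_{l}=0$, assemble $H$ by the block-circulant rule of Theorem \ref{thm 3:3}, and place the target in a single block position, $s_{\pi}=(\sigma,0,\ldots,0)$. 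Because the off-diagonal seeds vanish, the system $H\star e^{t}=s_{\pi}$ decouples (up to a permutation of block positions) into independent equations $A\star e_{j}^{t}=s_{j}$. Exactly one of these is inhomogeneous, and by the additivity of the $\pi$-weight (Result \ref{res : 3}) the homogeneous blocks can be set to zero at zero weight cost; hence a solution of $\pi$-weight $\leq t$ for the QC instance exists if and only if $A\star e_{1}^{t}=\sigma$ has a solution of $\pi$-weight $\leq t$, i.e. if and only if the original D-SDP instance is a YES instance. The construction is manifestly polynomial-time, which combined with membership in NP yields NP-completeness.

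The main obstacle I anticipate is the bookkeeping around the two features that separate this setting from classical Hamming-metric decoding: the metric is the block-based $\pi$-weight rather than the coordinate weight, and the arithmetic uses the shifted product $\star$ of Definition \ref{def 2:11} rather than ordinary polynomial multiplication. I must verify that the decoupling of the block-circulant system is \emph{exact} under $\star$, which rests on the identity $X^{\star s}\star P(X)=P(X)$ and the unit $\textbf{1}^{\star}=X^{n-m+1}$, and that zeroing the non-target blocks is genuinely weight-optimal in the $\pi$-metric so that no cheaper solution can leak across blocks and inflate or deflate the effective weight budget. Provided this seed-block embedding of Theorem \ref{thm 3:3} is \textbf{weight-faithful} in that sense, the hardness of the general instance transfers intact to the quasi-cyclic one and the reduction goes through.
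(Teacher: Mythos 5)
Your proof is correct, and although it shares the paper's overall strategy---establishing hardness by a polynomial-time reduction from the general syndrome decoding problem, planting the given parity-check matrix as the repeated diagonal block of a block-circulant matrix as licensed by Theorem \ref{thm 3:3}---your instantiation is genuinely different and in fact tighter. The paper fills the off-diagonal seed blocks $H_2,\ldots,H_l$ with \emph{random} matrices, plants an arbitrary vector $x$, inflates the weight budget to $t'=t+h$ with $h=\sum_{i} w_{\pi}(x_i)$, and offsets the syndrome by the contributions of $x$; extracting an SDP solution from a QC solution then requires the solver's blocks $e_1,\ldots,e_{l-1}$ to reproduce the planted contributions of $x_1,\ldots,x_{l-1}$: the paper silently substitutes $s_{y,1}=\sum_{i=1}^{l-1}H_{i+1}\cdot e_i^{T}$, a relation that was \emph{defined} with the $x_i$ in place of the $e_i$, and an adversarial solver is under no obligation to return a solution honoring it. Your zero-seed embedding avoids this weakness entirely: with $H_2=\cdots=H_l=0$ and syndrome $(\sigma,0,\ldots,0)$ the system decouples exactly (up to the block permutation induced by $\star$, per Result \ref{res:1}), the additivity of the $\pi$-weight (Result \ref{res : 3}) makes both directions of the YES-instance equivalence hold for \emph{every} solution the solver might return, and the weight budget stays $t$ rather than $t+h$. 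You also supply the NP-membership half (polynomial-time verification of a certificate vector), which the paper omits even though it is required for NP-completeness rather than mere NP-hardness. The only thing the paper's randomized dressing could be said to buy is a less degenerate-looking quasi-cyclic instance, but for a worst-case hardness claim that is irrelevant, so your route is both simpler and more rigorous.
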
 
\begin{proof}
	The general idea of the proof is the following. Starting from an instance of the SDP, an instance of the QCLEB-SDP is constructed in polynomial time and it is shown that if the latter is solved efficiently, then the instance of the former is necessarily solved efficiently.
	
	Let $(H,t,s_y)$ be an instance of the SDP and consider an QCLEB code with length $n=l\cdot n_{0}$ and dimension $k=l\cdot\left(n_{0}-r_{0}\right)$. The QCLEB-SDP instance $\left(H^{\prime}, t^{\prime}, s^{\prime}_y\right)$ is defined as follows. The matrix $H^{\prime}$ is
	
	$$
	H^{\prime}=\left(\begin{array}{cccc}
		H & H_{2} & \ldots & H_{l} \\
		H_{l} & H & \ldots & H_{l-1} \\
		\vdots & \vdots & \ddots & \vdots \\
		H_{2} & H_{3} & \ldots & H
	\end{array}\right)
	$$
	
	where the matrices $H_{i}, 1 \leq i \leq l$ are any matrices of size $r_{0} \times n_{0}$ randomly selected. Let $x \in \mathbf{F}_{q}^{n}$ be any vector. Because $n=l \cdot n_{0}$, $x$ can be seen as $x=\left(x_{1}, x_{2}, \ldots, x_{l}\right)$ where each $x_{i},\ 1 \leq i \leq l$ has length $n_{0}$. The value of $t^{\prime}$ is taken as $t+h$, with $h=\sum_{i=1}^{l} w_{\pi}\left(x_{i}\right)$. The syndrome $s^{\prime}_y$ is defined as $s^{\prime}_y=\left(s_y+s_{y,1}, s_{y,2}, \ldots, s_{y,l}\right)$ where
	
	$$
	\begin{gathered}
		H_{2} \cdot x_{1}^{T}+\cdots+H_{l} \cdot x_{l-1}^{T}=s_{y,1} \\
		H_{l} \cdot x_{l}^{T}+H \cdot x_{1}^{T}+\cdots+H_{l-1} \cdot x_{l-1}^{T}=s_{y,2} \\
		\vdots \\
		H_{2} \cdot x_{l}^{T}+H_{3} \cdot x_{1}^{T}+\cdots+H \cdot x_{l-1}^{T}=s_{y,l}
	\end{gathered}
	$$
	
	and each $s_{y,i}, 1 \leq i \leq l-1$ has length $r_{0}$.
	
	Let $\mathcal{A}$ be an algorithm capable of solving the instance $\left(H^{\prime}, t^{\prime}, s^{\prime}_y\right)$ of the QCLEB-SDP. This means that through $\mathcal{A}$, we can find a vector $e^{\prime}=\left(e_{1}, e_{2}, \ldots, e_{n}\right) \in \mathbb{F}_{q}^{n}$ with $w\left(e^{\prime}\right) \leq t^{\prime}$ such that
	
	$$
	H^{\prime} \star (e')^{T}=s^{\prime}
	$$
	
	The vector $e^{\prime}$ can be seen as $e^{\prime}=\left(e_{1}, e_{2}, \ldots, e_{l}\right)$ where each $e_{i}, 1 \leq$ $i \leq l$ has length $n_{0}$ and $\sum_{i=1}^{l} w_{\pi}\left(e_{i}\right)=h$. Thus, we have that $H^{\prime} \star (e')^{T}=s^{\prime}$ is
	
	$$
	\begin{gathered}
		H.x_l^T +H_{2} \cdot e_{1}^{T}+\cdots+H_{l} \cdot e_{l-1}^{T}=s_y + s_{y,1} \\
		H_{l} \cdot e_{l}^{T}+H \cdot e_{1}^{T}+\cdots+H_{l-1} \cdot e_{l-1}^{T}=s_{y,2} \\
		\vdots \\
		H_{2} \cdot e_{l}^{T}+H_{3} \cdot e_{1}^{T}+\cdots+H\cdot e_{l-1}^{T}=s_{y,l}
	\end{gathered}
	$$
	
	By definition, $s_{y,1}=\sum_{i=1}^{l-1} H_{i+1} \cdot e_{i}^{T}$. Substituting $s_{0}$ in the first of the above equations we get
	
	$$
	H \cdot e_{l}^{T}=s
	$$
	
	Since $w_{\pi}(e) \leq t^{\prime}$, we have that $w_{\pi}\left(e_{l}\right) \leq t^{\prime}-h=t$. Therefore, $e_{l}$ is a solution of the instance of the SDP. In this way, a solution of the QC-SDP allows to find in polynomial time, a solution of an instance of the SDP.
\end{proof}

\subsection{Meggitt-Like Decoding of QC-LEB  Codes}
Here we generalize a technique for decoding LEB QC codes called Meggitt decoding, this technique was introduced by J. E. Meggitt in $ 1960 $ \cite{Meggitt1,Meggitt2}.  Meggitt decoder is acceptable in practice in terms of substantially less complication and providing the best interference protection in comparison to a syndrome decoder. Another advantage of this method is that the decryption table, which contains all the error words and all the syndromes, is replaced by a table containing only the syndromes of the error words whose last symbol is incorrect. This saves a lot of memory space and time.\\
\\
The Meggitt decoder performs a symbol-by-symbol decoding. First, an erroneous component of the received w
encoding and decoding, also captures the cyclic nature and
reinforces orthogonality, a property that remains an essen-
tial component for error detection and correction. The study
illustrates the decoding of QC-LEB codes and defines the
syndrome polynomial while addressing the computational
complexity and emphasizing their NP-completeness. This
significant theoretical with practical implementation results
underscore the inherent difficulty of the decoding problem in
our case, that can be leveraged as a groundwork for other opti-
mised schemes where efficiency and security are paramount.
This was utilised to sketch a signature scheme with polyno-
mial operations over a finite field, all within a zero-knowledge
framework. Security analysis performed on this construction
proves uniform distribution, high entropy, and the absence of
exploitable patterns, which was affirmed by statistical tests.
Efficiency evaluations of the scheme exhibits small signature
sizes and short execution times for large inputs offering a solid
foundation for future cryptographic work aimed at further
optimizations and broader deployment.\\
\\
Take $C$ an quasi-cyclic LEB code of type $\pi=[m]^s$ (where $s=l\times p\geq 1$) and index $p$ ($p$ an integer greater than $1$). Now suppose that $C$ is generated by a polynomial $g$ of the form $g(X)= \sum_{i=0}^{l-1} X^{\star pi}\star g_i(X)$ where $g_i(X)= \sum_{j=0}^{p-1}g_{ij}(X)\star X^{\star i}$, $g_{ij}(X)=\sum_{j=0}^{m-1}\alpha_jX^j$ and $\alpha_j\mathbb{F}_q$, and then let us describe the Meggitt-like decoding algorithm and use an example to illustrate each step.
\begin{itemize}
	\item[$I$:] Let $c(X)$ be the word sent, $y(X)$ the word received, and $e(X)$ the error word with $w_{\pi}(e(X)) <t_{\pi}$. Find the syndrome table of errors $e(X)$ whose index component $n-1$ is erroneous.
	\begin{example}\label{exa 2:3}
		For the code $ C $ defined in the \ref{exa 2:2}, we calculate the error patterns syndromes. (cf. Table \ref{tab 1}).
		
		\begin{table} \label{tab 1}
			\centering
			$\begin{array}{|l|l|}
				\hline
				e(X)&S_{\pi}(e(X))\\\hline 
				((X^2)\star X^{\star 1})\star X^{\star 2}& ((X^2)\star X^{\star 1})\star X^{\star 2}\\\hline
			\end{array}$
			
			\caption{Table of the Errors Patterns Syndroms of $ C $}
		\end{table}
	\end{example}
	\item[$II$:] Suppose that $ y(X) $ is the received vector. Compute the syndrome polynomial $ S_{\pi}(y(X))= R_{g(X)} y(X))$. If $ S_{\pi}(y(X))=0$ then $y(X)$ is the sent word, else $ y(X) = c(X) + e(X) $ with $ c(X) \in C $,  By \ref{th 2:5}, $ S_{\pi}(y(X)) = S_{\pi}(e(X)) $. The vector $ e(X) $ may be one of the $ e(X) $ calculated in the list of syndromes.
	\begin{example}\label{exa 2:4}
		Continuing with \ref{exa 2:2}, Suppose that we have sent the word $c(X)= (x^2\star X^{\star 1})\star \textbf{1}^{\star}+((X^2+1)\star X^{\star 1})\star X^{\star 2}$ and that we have receive the word $y(X)=(x^2\star X^{\star 1})\star \textbf{1}^{\star}+(X^2\star \textbf{1}^{\star}+ (X^2+1)\star X^{\star 1})\star X^{\star 2}$. Then the syndrome polynomial of $y(X)$ is :  $s_{\pi}(y(X))=R_{g(X)}=(X^2\star \textbf{1}^{\star})\star X^{\star 2}$.
	\end{example}
	\item[$III$:] If $ S_{\pi}(y(X)) $ is in the list computed in Step $ I $, then we know the error polynomial $ e(X) $ and this can be subtracted from $ y(X) $ to obtain the codeword $ c(X) $. If $ S_{\pi}(y(X)) $ is not in the list, go on to Step $ IV $. 
	\begin{example}\label{exa 2:5}
		$ S_{\pi}(y(X)) $  from \ref{exa 2:4} is in the list of syndrome polynomials given in \ref{exa 2:3},  with $S_{\pi}(y(X)) =S_{\pi}(e_1(X)) $. Thus the received word is $ c(X)=y(X)-e_1(X)=1\star X^{\star 1}+X\star X^{\star 2}+(1+X)\star X^{\star 4} $.
	\end{example}
	\item[$IV$:] Compute the syndrome polynomial of $ X\star y(X), X^2\star y(X),\ldots $ in succession until the syndrome polynomial is in the list from Step $ I $. If $ S_{\pi}(X^i\star y(X)) $ is in this list and is associated with the error polynomial $ e(X) $, then the received vector is decoded as $ y(X)-X^{\star i}\star e(X) $.
	\begin{example}\label{exa 2:6}
		$ S_{\pi}(y(X)) $ from \ref{exa 2:4} is not in the list of syndrome polynomials given in \ref{exa 2:3}. However, 
		
		\begin{align*}
			X^3\star y(X)&=[((1)\star X^{\star 1})\star\textbf{1}^{\star}]\star y(X)\\
			&=X^{\star 1}\star g(X) + [((X^2)\star X^{\star 1})\star X^{\star 2}].
		\end{align*} 
		Then 
		\begin{align*}
			S_{\pi}(X^3\star y(X)) &= R_{g(X)}(X^{\star 1}\star g(X) + [((X^2)\star X^{\star 1})\star X^{\star 2}]).\\
			&= ((X^2)\star X^{\star 1})\star X^{\star 2}.
		\end{align*}
		$((X^2)\star X^{\star 1})\star X^{\star 2}$ is in the list of syndrome polynomials given in \ref{exa 2:3}. 
		
		Hence, 
		\begin{align*}
			c(X)&=y(X)-X^{3}\star((X^2)\star X^{\star 1})\star X^{\star 2}\\
			&=y(X)+ [((1)\star X^{\star 1})\star\textbf{1}^{\star}]\star ((X^2)\star X^{\star 1})\star X^{\star 2} \\
			&=y(X)+((X^2)\star \textbf{1}^{\star})\star X^{\star 2}\\
			&= (X^2\star X^{\star 1})\star \textbf{1}^{\star}+((X^2+1)\star X^{\star 1})\star X^{\star 2}
		\end{align*}
		is the transmitted word 
	\end{example}
\end{itemize}
\subsection{Complexity}
The encoding process for a code with rate \( r = \frac{k}{n} \) consists of check bits generation, and with The Meggitt-Like decoding algorithm, the error patterns involves \( 2^{(1-r) n} \) entries of length at most \( n \), which is enumerated in \( 2^n \cdot n^2 \) time. The algorithm considers error patterns up to a weight of \( p + \epsilon \), with tolerance \( \epsilon > 0 \). It is possible to see if  a code reaches the Shannon bound within \( O\left(2^{(1-r+\epsilon) n}\right) \) time, for any \( \epsilon > 0 \), this using q  syndrome table for error patterns with weights up to \( (p + \epsilon') n \) in \( 2^{(1-r+\epsilon) n} \cdot n^2 \) time.

\section{The code based signature}
Let us consider a \(\pi\)-cyclic code over \(\mathbb{Z}_4\) with length 3 and generator polynomial \( g(x) = x^2 + 2x + 1 \). Let us take the ring \(R = \mathbb{Z}_4\), the integers modulo 4. The ring \(\mathbb{Z}_4\) is a finite chain ring with maximal ideal \((2)\) and \( \mathbb{Z}_4 / 2 \mathbb{Z}_4 \cong \mathbb{Z}_2 \). Let's construct a \(\pi\)-cyclic code over \(\mathbb{Z}_4\) with \(n = 3\). One has to identify
the ring and maximal ideal: \(R = \mathbb{Z}_4\), \(\pi = 2\), maximal ideal \((2)\). The code is a submodule of \( \mathbb{Z}_4^3 \) which needs to be invariant under the \(\pi\)-cyclic shift defined by \(\pi \cdot \mathbf{c} = (2 c_2, c_0, c_1)\). We assume having a  generator polynomial \( g(x) \) over \(\mathbb{Z}_4\) such that the code is generated by the shifts of this polynomial. Let  \(g(x) = x^2 + 2x + 1 \) be in \(\mathbb{Z}_4[x]\). To generate codewords, it is done through multiplying \(g(x)\) by elements of \(\mathbb{Z}_4[x]/(x^3 - 1)\):
\begin{itemize}
	\item \( 1 \cdot g(x) = x^2 + 2x + 1 \)
	\item \( x \cdot g(x) = x^3 + 2x^2 + x \equiv 2x^2 + x + 1 \mod (x^3 - 1) \) (since \( x^3 \equiv 1 \))
	\item \( x^2 \cdot g(x) = x^4 + 2x^3 + x^2 \equiv x + 2x^2 + x^2 \equiv x + 3x^2 \equiv x + x^2 \mod (x^3 - 1) \)
\end{itemize}
codewords are then  \( (1, 2, 1) \), \( (1, 2, 2) \) and \( (1, 1, 2) \).\\
We then apply the \(\pi\)-cyclic shift:
\begin{itemize}
	\item \( \pi \cdot (1, 2, 1) = (2 \cdot 1, 1, 2) = (2, 1, 2) \)
	\item \( \pi \cdot (1, 2, 2) = (2 \cdot 2, 1, 2) = (0, 1, 2) \)
	\item \( \pi \cdot (1, 1, 2) = (2 \cdot 2, 1, 1) = (0, 1, 1) \)
\end{itemize}
these codewords belong to the submodule of \(\mathbb{Z}_4^3\) generated by \(g(x)\), hence the \(\pi\)-cyclic nature.\\
To elaborate a defined general signature scheme using the $\star$ operation, we define  a class representing polynomials in $F_q[X]$ modulo the ideal $\langle X_n-1 \rangle$.\\
The obvious way would be to use a generator matrix $G$ and a permutation matrix $Q$ of size $n \times n$ by shuffling the identity matrix, computing the parity check matrix $H$ using $H=S.Q$, where $S$ is a matrix ensuring $HQ$ is in systematic form $(I|T)$. Then extract an $(n-k) \times k$ matrix $T$, with  the right-hand side part of $H$, forming the public key. However, in our approach we choose to use the ZKID scheme to construct a signature from the defined code, we use the binary field \( \mathbb{F}_2 \), to generate random polynomial \( g_{ij}(X) \)  with coefficients in \( \mathbb{F}_2 \) and degree \( m-1 \), and also \( g_i(X) \) using the formula
$\sum_{j=0}^{p-1} g_{ij}(X) \star X^{\star i}$. At last \( g(X) \) is generated using $\sum_{i=0}^{l-1} g_i(X) \star X^{\star pi}$, and the generator matrix is $G_i = X^{\star pi} \star g(X)$. For the codeword and key generation:
\begin{itemize}
	\item Given \( G \) of dimensions \( k \times n \), generate all possible binary messages of length \( k \).
	\item Compute the corresponding codewords:
	$\mathbf{c} = \mathbf{m} \cdot G \mod 2$
	\item Private key is a random binary vector \( \mathbf{s} \) of length \( n \), and the public is $ G^T $.
\end{itemize}
The signature will look as follows
\begin{itemize}
	\item \textbf{commitment}: generate a random binary vector \( \mathbf{r} \) of the same length as the secret key \( \mathbf{s} \), and compute commitment $\mathbf{c} = (\mathbf{r} \cdot \mathbf{s}) \mod 2$
	\item \textbf{challenge}: hash the commitment \( \mathbf{c} \) using hashing function, then convert to an integer and reduce modulo 2, i.e. $e = \left(h(\mathbf{c})\right) \mod 2$.
	\item \textbf{response}: compute  $\mathbf{z} = (\mathbf{r} + e \cdot \mathbf{s}) \mod 2$.
	\item \textbf{verification}: hash \( \mathbf{c} \) and reduce modulo 2 to get the expected challenge $e' = \left(h(\mathbf{c})\right) \mod 2$ and verify if the recomputed challenge \( e' \) matches the provided challenge \( e \).
\end{itemize}
The following algorithm performs operations on polynomials over a finite field, a straightforward expansion to get the coefficients of a polynomial requires $O(n^2)$. If the roots of the polynomial are distinct, it is equivalent to polynomial interpolation with $n$ points, and a fast polynomial interpolation algorithm can be run in $O(nlog^2(n))$ time.
\begin{algorithm}[H]
	\caption{PolynomialMod}
	\begin{algorithmic}
		\STATE \hspace{0.5cm} \textbf{Initialization:}
		\STATE \hspace{1cm} \textbf{Input:} Coefficients $\mathbf{a}$, Degree $n$, Field size $q$
		\STATE \hspace{1cm} $\mathbf{a} \leftarrow \mathbf{a} \mod q$
		\STATE \hspace{1cm} Truncate $\mathbf{a}$ to degree $n$
		\STATE \hspace{0.5cm} \textbf{Addition:}
		\STATE \hspace{1cm} $\mathbf{c} \leftarrow (\mathbf{a} + \mathbf{b}) \mod q$
		\STATE \hspace{1cm} Truncate $\mathbf{c}$ to degree $n$
		\STATE \hspace{0.5cm} \textbf{Multiplication:}
		\STATE \hspace{1cm} $\mathbf{c} \leftarrow (\mathbf{a} \cdot \mathbf{b}) \mod q$
		\STATE \hspace{1cm} Truncate $\mathbf{c}$ to degree $n$
		\STATE \hspace{0.5cm} \textbf{Star Operation:}
		\STATE \hspace{1cm} $\mathbf{c} \leftarrow (\text{shift}(\mathbf{a} \cdot \mathbf{b}, m-1)) \mod q$
		\STATE \hspace{1cm} Truncate $\mathbf{c}$ to degree $n$
	\end{algorithmic}
\end{algorithm}
From the described field and the class above the polynomial $g_{ij}$ is generated in $O(m+nlog^2(n))$,  assuming random coefficients take $O(m)$
time (where m is the degree of the polynomial)
\begin{algorithm}[H]
	\caption{Generate $g_{ij}(X)$}
	\begin{algorithmic}
		\STATE \textbf{Input:} $p, m, q$
		\STATE $\mathbf{a} \leftarrow \text{Random coefficients in } \mathbb{F}_q^{m}$
		\STATE $g_{ij}(X) \leftarrow \text{PolynomialMod}(\mathbf{a}, p \cdot m, q)$
	\end{algorithmic}
\end{algorithm}
By applying the operations defined in the polynomial class on the polynomial before the new $g_i$ is generated
\begin{algorithm}[H]
	\caption{Generate $g_i(X)$}
	\begin{algorithmic}
		\STATE \textbf{Input:} $i, p, m, q$
		\STATE $g_i(X) \leftarrow 0$
		\FOR{$j \leftarrow 0$ \textbf{to} $p-1$}
		\STATE $g_{ij}(X) \leftarrow \text{Generate } g_{ij}(p, m, q)$
		\STATE $X^{\star i} \leftarrow \text{PolynomialMod}([1] + [0]^{i \cdot m}, p \cdot m, q)$
		\STATE $g_i(X) \leftarrow g_i(X) + (g_{ij}(X) \star X^{\star i})$
		\ENDFOR
	\end{algorithmic}
\end{algorithm}
and through the sum of those polynomial  is obtained
\begin{algorithm}[H]
	\caption{Generate $g(X)$}
	\begin{algorithmic}
		\STATE \textbf{Input:} $l, p, m, q$
		\STATE $g(X) \leftarrow 0$
		\FOR{$i \leftarrow 0$ \textbf{to} $l-1$}
		\STATE $g_i(X) \leftarrow \text{Generate } g_i(i, p, m, q)$
		\STATE $X^{\star pi} \leftarrow \text{PolynomialMod}([1] + [0]^{pi \cdot m}, p \cdot m \cdot l, q)$
		\STATE $g(X) \leftarrow g(X) + (g_i(X) \star X^{\star pi})$
		\ENDFOR
	\end{algorithmic}
\end{algorithm}
The generator matrix is a direct result of the generator polynomial
\begin{algorithm}[H]
	\caption{Generator Matrix}
	\begin{algorithmic}
		\STATE \textbf{Input:} $g(X), p, l, m$
		\STATE $G \leftarrow []$
		\FOR{$i \leftarrow 0$ \textbf{to} $l-1$}
		\STATE $X^{\star pi} \leftarrow \text{PolynomialMod}([1] + [0]^{pi \cdot m}, g.n, g.q)$
		\STATE $G_i \leftarrow (X^{\star pi} \star g(X)).\text{coeffs}$
		\STATE Append $G_i$ to $G$
		\ENDFOR
		\STATE Pad all rows of $G$ to the maximum row length
		\STATE \textbf{Output:} $G$
	\end{algorithmic}
\end{algorithm}
and the codewords can be retrieved from this matrix or the polynomial as well
\begin{algorithm}[H]
	\caption{Codewords}
	\begin{algorithmic}
		\STATE \textbf{Input:} $G$
		\STATE $k, n \leftarrow \text{shape}(G)$
		\STATE $\text{codewords} \leftarrow []$
		\FOR{$i \leftarrow 0$ \textbf{to} $2^k-1$}
		\STATE $\mathbf{m} \leftarrow \text{Binary representation of } i \text{ with length } k$
		\STATE $\mathbf{c} \leftarrow (\mathbf{m} \cdot G) \mod 2$
		\STATE Append $\mathbf{c}$ to codewords
		\ENDFOR
		\STATE \textbf{Output:} $\text{codewords}$
	\end{algorithmic}
\end{algorithm}
The next step is to generate keys using the matrix
\begin{algorithm}[H]
	\caption{Key Generation}
	\begin{algorithmic}
		\STATE \textbf{Input:} $G$
		\STATE $\text{private key} \leftarrow \text{Random binary vector of length } G.\text{shape}[1]$
		\STATE $\text{public key} \leftarrow G^T$
		\STATE \textbf{Output:} $\text{private key}, \text{public key}$
	\end{algorithmic}
\end{algorithm}
and the signature is performed as follows
\begin{algorithm}[H]
	\caption{Signature}
	\begin{algorithmic}
		\STATE \textbf{Input:} $\text{private key}, G$
		\STATE $\mathbf{r} \leftarrow \text{Random binary vector of length } \text{private key}.\text{shape}[0]$
		\STATE $\mathbf{c} \leftarrow (\mathbf{r} \cdot \text{private key}) \mod 2$
		\STATE $e \leftarrow (Hash(\mathbf{c})) \mod 2$
		\STATE $\mathbf{z} \leftarrow (\mathbf{r} + e \cdot \text{private key}) \mod 2$
		\STATE Convert $\mathbf{c}, e, \mathbf{z}$ to string format
		\STATE Concatenate strings to form the signature
		\STATE \textbf{Output:} $\text{signature}$
	\end{algorithmic}
\end{algorithm}
To verify the signature and public key are used
\begin{algorithm}[H]
	\caption{Verification}
	\begin{algorithmic}
		\STATE \textbf{Input:} $\text{public key}, \text{signature}$
		\STATE Decode strings format to obtain $\mathbf{c}, e, \mathbf{z}$
		\STATE $e' \leftarrow (Hash(\mathbf{c})) \mod 2$
		\STATE \textbf{Output:} $e' == e$
	\end{algorithmic}
\end{algorithm}
It is worth mentionning, that for the sake of optimisation in the implementation some random sequences are generated using egyptians fractions \cite{ilias}.

\section{Security and performance analysis}
The following tests were conducted on an Intel(R) Core(TM) i7-8650U CPU, 1.90GHz system, with 7.7Gi RAM.The size of  the signature as seen on Table \ref{signtable} is  small, which would reflect positively on storage and bandwidth, especially that  signing and verification time is very short, giving room for high volume documents to be signed. \\
\begin{table}[H]
	\begin{center}
		\begin{tabular}{|c|c|c|c|c|}
			\hline
			Sign size & File size & Sign time & Check time & Overall time \\
			\hline
			200 bytes & 13 bytes & 0.000214s & 0.000033s & 0.000247s \\
			\hline
		\end{tabular}
		\caption{Signature time and size}
		\label{signtable}
	\end{center}
\end{table}
The next graph in FIGURE. \ref{input} shows the signature and the verification time as a function of the input size. The QC-LEB signing time decreases and stabilizes as input size increases. This means that signing in QC-LEB is less dependent on larger input sizes because of our optimization and the stable computation pattern at scale. The verification time does not vary with input size and is higher than the signing time, indicating that signing is comparatively efficient. Although large input signing via CFS is efficient, high verification time acts as the bottleneck, especially for applications that require frequent validation of signatures. WAVE, on the other hand, has the least signing and verification time at all input sizes and remains highly stable and efficient.

\begin{figure}[H]
\centering
\includegraphics[width=0.5\textwidth, height=0.32\textheight]{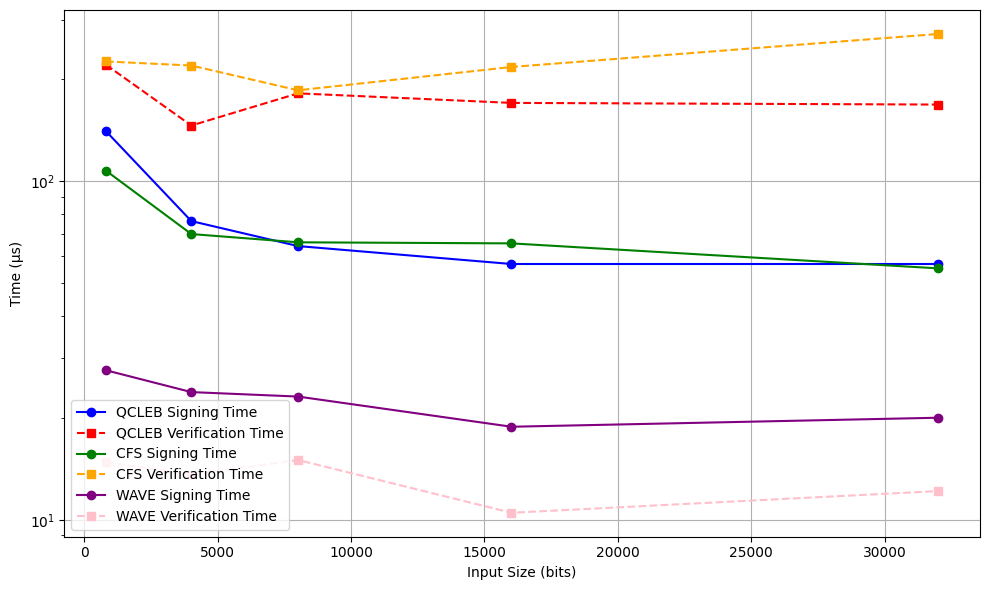}
\caption{Input time and size for the signature}
\label{input}
\end{figure}
A further test on Figure. \ref{autoco} shows the absence of any exploitable patterns in the sequence $$\frac{\sum_{n=1}^{N-k} (x_n - \mu)(x_{n+k} - \mu)}{\sum_{n=1}^{N} (x_n - \mu)^2}$$ where $x_n$ is the value at time point $n$, $\mu$, the mean qnd $N$ the length of the series. The autocorrelation for CFS starts high at lag 0 and quickly falls off near 0 for all subsequent lags. Thus, CFS produces signatures that carry very little internal correlation. The WAVE scheme is shown to be essentially random but the small fluctuations indicate some minor internal structure. As for QC-LEB, autocorrelation starts high at lag 0 and decays quickly, much like the CFS method, but stays nearer to 0  compared to WAVE with few smaller fluctuations, indicating that it generates signatures that are as much random as CFS, with very low internal correlation. 
\begin{figure}[H]
\centering
\includegraphics[width=0.5\textwidth, height=0.3\textheight]{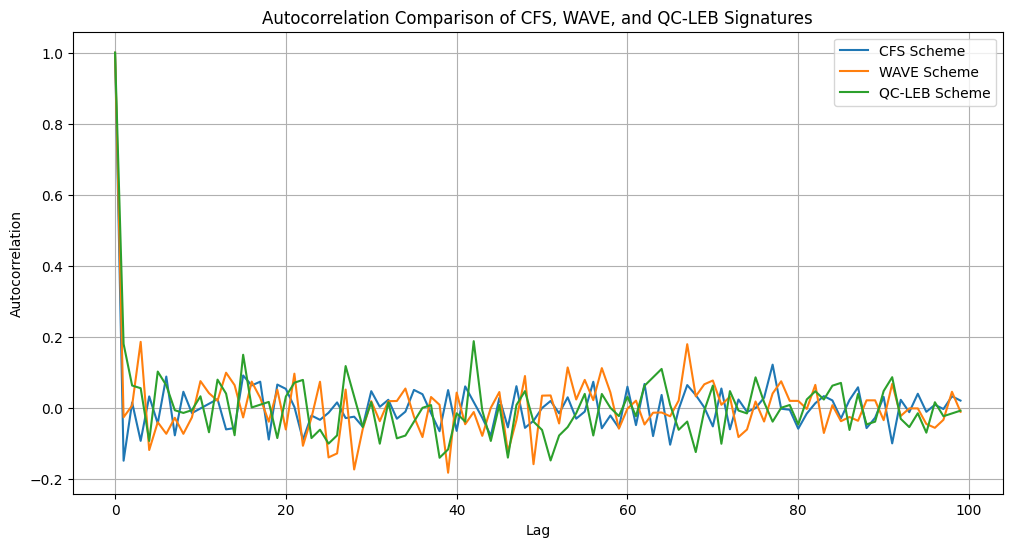}
\caption{Auto-correlation of the sequence bits}
\label{autoco}
\end{figure}
The following plot on FIGURE. \ref{plot1} shows the relationship between key size/signature size and the security level 
\begin{figure}[H]
\centering
\includegraphics[width=0.5\textwidth, height=0.3\textheight]{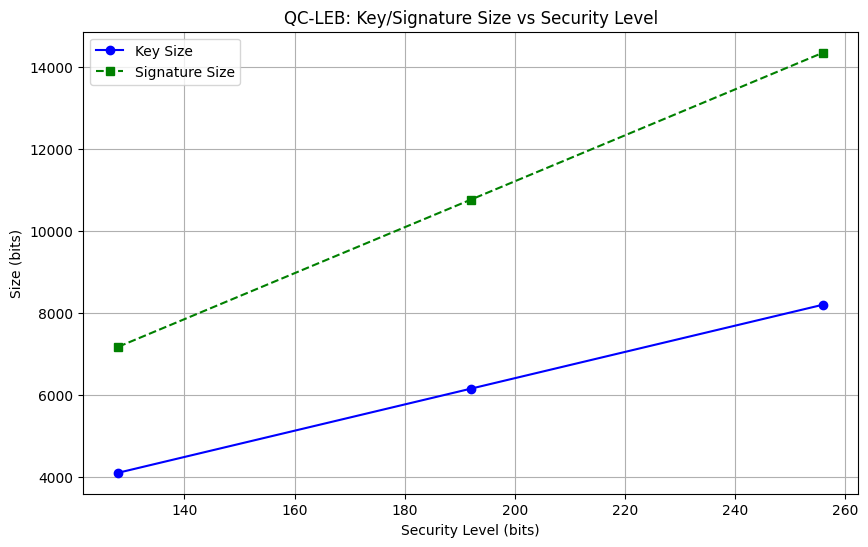}
\caption{NIST Security of QC-LEB according to key and signature size}
\label{plot1}
\end{figure}
As can be noticed for an AES-Security Level of 128, n being 512 and key size 4096, the  signature size is 7168. For AES-192 and n=768, key size= 6144, the signature size is 10752
Last concerning AES-256 security level, with n=1024, key size 8192, the signature Size becomes 14336.
\begin{figure}[H]
\centering
\includegraphics[width=0.5\textwidth, height=0.3\textheight]{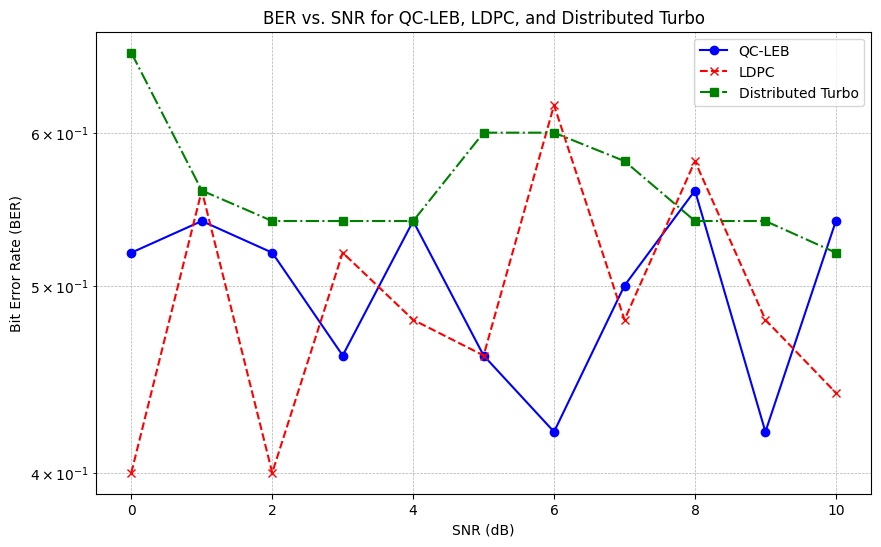}
\caption{Graph comparing BER vs SNR for QC-LEB against Distributed Turbo and LDPC}
\label{bersnr}
\end{figure}
QC-LEB on FIGURE. \ref{bersnr} displays moderate fluctuation in BER across SNR values. LDPC barely outshines QC-LEB and Distributed Turbo at some instances and even then, at lower SNR values. At higher SNR values, which seems to be what all three converge towards, QC-LEB looks good as it consistently shows the best results of reduced BER over those of the other two. Generally, distributed Turbo remains stable. However, according to statistical analysis, it will show a BER mostly higher than QC-LEB in the SNR value interval.
\begin{figure}[H]
\centering
\includegraphics[width=0.5\textwidth, height=0.2\textheight]{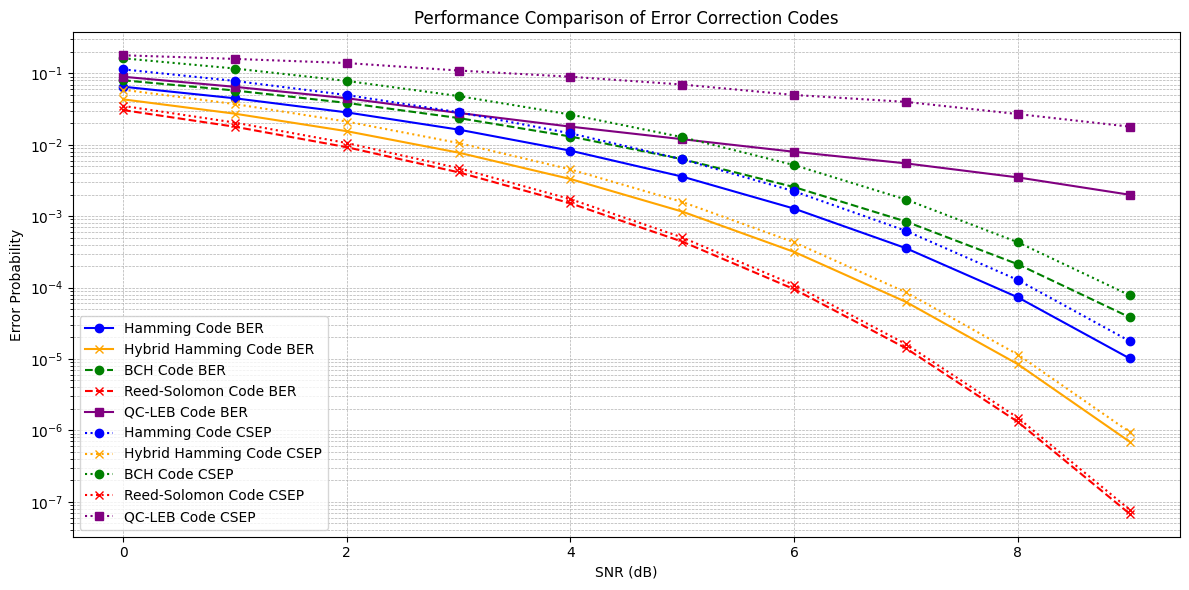}
\caption{QC-LEB code performance comparative analysis}
\label{qcleb}
\end{figure}
Using an Intel(R) Core(TM) i7-8650U CPU, 1.90GHz system, with 7.7Gi RAM,  FIGURE. \ref{qcleb} shows a semilog graphs with Signal-to-Noise-Ratio (SNR) on the x axis and the error probability on the y axis, QC-LEB codes show a Bit Error Rate (BER) and Channel Symbol Error Probability (CSEP) more efficient than standard Hamming and Hybrid Hamming, but less efficient than BCH and Reed-Solomon. Thus, QC-LEB codes represent a compromise between the simple and advanced codes which make it convenient for lightweight implementation on constrained resources and embedded devices.
\begin{figure}[H]
\centering
\includegraphics[width=0.5\textwidth, height=0.3\textheight]{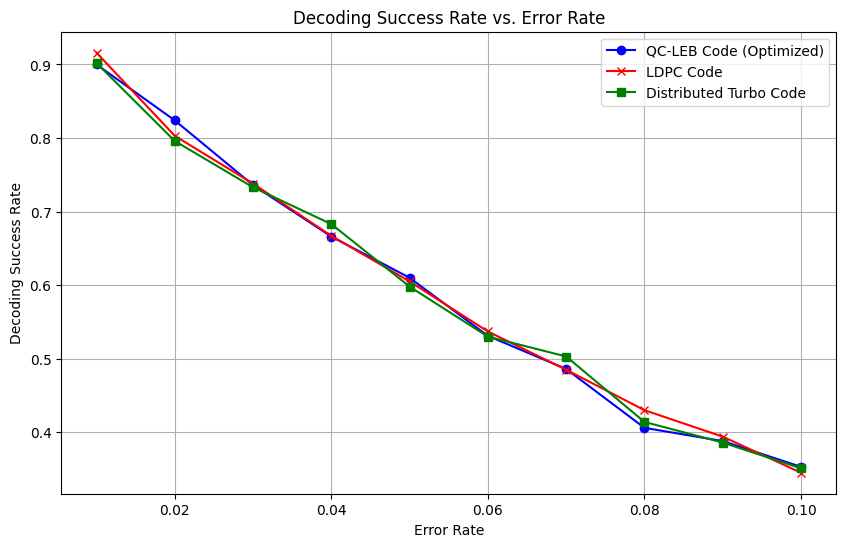}
\caption{QC-LEB decoding success rate against error rate}
\label{qcrate}
\end{figure}
In all three codes on this graph of FIGURE. \ref{qcrate}, the increase in error rates brings down the decoding success rate. This is quite normal as greater errors in received data create a difficulty in recovering the original text by the decoder. Among all the error rates, QC-LEB gives the maximum decoding success. It shows the efficiency of the optimization techniques put in this code. Moderate performance reveals itself in LDPC Code, the decoding success rate of which lies in between the optimized QC-LEB and the Distributed Turbo Code. Finally, the Distributed Turbo Code turns out to have the least success in decoding among the three codes.
\begin{figure}[H]
\centering
\includegraphics[width=0.5\textwidth, height=0.3\textheight]{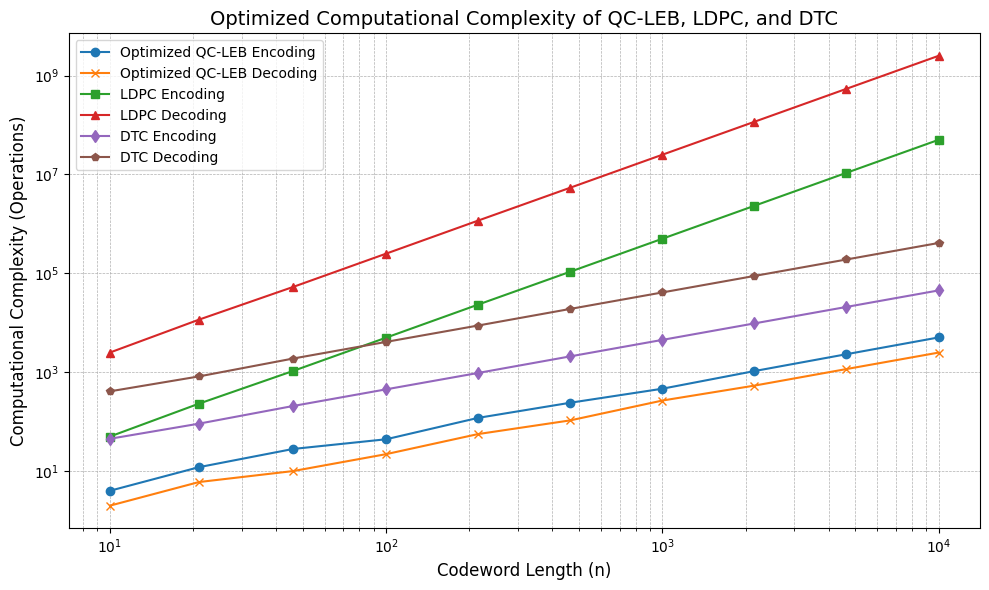}
\caption{Computational complexity of various codes}
\label{qccomp}
\end{figure}
The encoding complexities of FIGURE \ref{qccomp}. of the optimized QC-LEB codes are much lower compared to the other codes, especially for longer codeword lengths. Generally, the graphs exhibit a flat or gently increasing tendency, suggesting minimal increases in complexity with increasing codeword length. LDPC code seems to have increasing complexities with reasonable growth rates due to the codeword size; there is an almost linear relation between increasing complexities and large codewords. DTC code has the highest slope, directly indicating that the complexity of the DTC code grows the fastest with increasing codeword length. The QC-LEB decoding complexity shows low growth rates, and the rate of increase is slightly above the rate of increased encoding complexity. It still remains significantly lower than the other codes in longer codeword lengths. The LDPC decoding complexity grows at a moderate rate, the same as the encoding complexity. Given how its performance becomes increasingly dependent on the number of nodes, it is plausible that DTC decoding carries the highest slope among all codes. QC-LEB posses both the lowest encoding and decoding complexity making it an interesting choice in computationally critical regimes for applications where efficient processing is an absolute requirement.\\
Let us consider the \((n, k)\) QC-LEB code, its code rate is
\[
R = \frac{k}{n}
\]
with
\begin{itemize}
    \item \( k \) as number of information bits.
    \item and  \( n \) for  codeword length.
\end{itemize}
Such a code is defined by an \( r \times c \) parity-check matrix with achievable code rate 
\[
R = 1 - \frac{m}{n}
\]
when having a \( m \times n \) parity-check matrix \( H \).
with
\begin{itemize}
    \item \( m \)  number of parity-check equations.
    \item \( n \) block length.
\end{itemize}
\begin{table}[h]
    \centering
    \caption{Achievable Code Rates for $m/n = 0.05$}
\begin{tabular}{|c|c|c|}
    \hline
Block Length $(n)$ & Parity Bits $(m)$ & Code Rate $(k/n)$ \\
\hline
100  & 5   & 0.9500  \\
644  & 32  & 0.9503  \\
1188 & 59  & 0.9503  \\
1733 & 86  & 0.9504  \\
2277 & 113 & 0.9504  \\
2822 & 141 & 0.9500  \\
3366 & 168 & 0.9501  \\
3911 & 195 & 0.9501  \\
4455 & 222 & 0.9502  \\
5000 & 250 & 0.9500  \\
    \hline
\end{tabular}
\end{table}
\begin{table}[h]
    \centering
\caption{Achievable Code Rates for $m/n = 0.10$}
\begin{tabular}{|c|c|c|}
    \hline
    Block Length $(n)$ & Parity Bits $(m)$ & Code Rate $(k/n)$ \\
    \hline
    100  & 10  & 0.9000  \\
    644  & 64  & 0.9006  \\
    1188 & 118 & 0.9007  \\
    1733 & 173 & 0.9002  \\
    2277 & 227 & 0.9003  \\
    2822 & 282 & 0.9001  \\
    3366 & 336 & 0.9002  \\
    3911 & 391 & 0.9000  \\
    4455 & 445 & 0.9001  \\
    5000 & 500 & 0.9000  \\
    \hline
\end{tabular}
\end{table}
\begin{table}[h]
    \centering
    \caption{Achievable Code Rates for $m/n = 0.20$}
\begin{tabular}{|c|c|c|}
    \hline
    Block Length $(n)$ & Parity Bits $(m)$ & Code Rate $(k/n)$ \\
    \hline
    100  & 20  & 0.8000  \\
    644  & 128 & 0.8012  \\
    1188 & 237 & 0.8005  \\
    1733 & 346 & 0.8003  \\
    2277 & 455 & 0.8002  \\
    2822 & 564 & 0.8001  \\
    3366 & 673 & 0.8001  \\
    3911 & 782 & 0.8001  \\
    4455 & 891 & 0.8000  \\
    5000 & 1000 & 0.8000 \\
    \hline
\end{tabular}
\end{table}
\begin{table}[h]
    \centering
\caption{Achievable Code Rates for $m/n = 0.30$}
\begin{tabular}{|c|c|c|}
    \hline
    Block Length $(n)$ & Parity Bits $(m)$ & Code Rate $(k/n)$ \\
    \hline
    100  & 30  & 0.7000  \\
    644  & 193 & 0.7003  \\
    1188 & 356 & 0.7003  \\
    1733 & 519 & 0.7005  \\
    2277 & 683 & 0.7000  \\
    2822 & 846 & 0.7002  \\
    3366 & 1009 & 0.7002 \\
    3911 & 1173 & 0.7001 \\
    4455 & 1336 & 0.7001 \\
    5000 & 1500 & 0.7000 \\
    \hline
\end{tabular}
\end{table}
\begin{table}[h]
    \centering
\caption{Achievable Code Rates for $m/n = 0.40$}
\begin{tabular}{|c|c|c|}
    \hline
    Block Length $(n)$ & Parity Bits $(m)$ & Code Rate $(k/n)$ \\
    \hline
    100  & 40  & 0.6000  \\
    644  & 257 & 0.6009  \\
    1188 & 475 & 0.6002  \\
    1733 & 693 & 0.6001  \\
    2277 & 910 & 0.6004  \\
    2822 & 1128 & 0.6003 \\
    3366 & 1346 & 0.6001 \\
    3911 & 1564 & 0.6001 \\
    4455 & 1782 & 0.6000 \\
    5000 & 2000 & 0.6000 \\
    \hline
\end{tabular}
\end{table}
\begin{itemize}
\item For \( m/n = 0.05 \),  \( R \approx 0.95 \), so parity bits make only 5\% of the codeword.
\item For \( m/n = 0.40 \), \( R \approx 0.60 \), so redundancy forms 40\% of the codeword.
\end{itemize}
QC-LEB code is scalable since its rate is close constant for \( m/n \) values, even when block lengths differ.\\
As for redundancy
\begin{itemize}
    \item When the code rate gets higher, let \( R = 0.95 \):
    \begin{itemize}
        \item Then you get bandwidth efficiency when you observe few parity-check bits; and more noise for low error correction.
    \end{itemize}
    \item When  the code rate gets lower, \( R = 0.60 \)
    \begin{itemize}
        \item Then you get more strong correction for more redundancy; and if the overhead increases as the bits get transmitted then spectral efficiency decreases.
    \end{itemize}
\end{itemize}
This directly implies the highest the code rate is, \( R \geq 0.9 \),  the more it is of use to  high-throughput applications where error resilience is less critical and low-noise channels like fiber-optic or wireless in brief distance. The lower the code rate is, \( R \leq 0.7 \), the more compatible it is with noisy settings such as deep-space and SatCom, where data recovery is prioritized. This asserts that QC-LEB code is adequate for adaptive coding in SatCom, IOT and 6G where dynamic code rates are required to tune \( m/n \) for real-time performance.
\begin{figure}[H]
\centering
\includegraphics[width=0.5\textwidth, height=0.3\textheight]{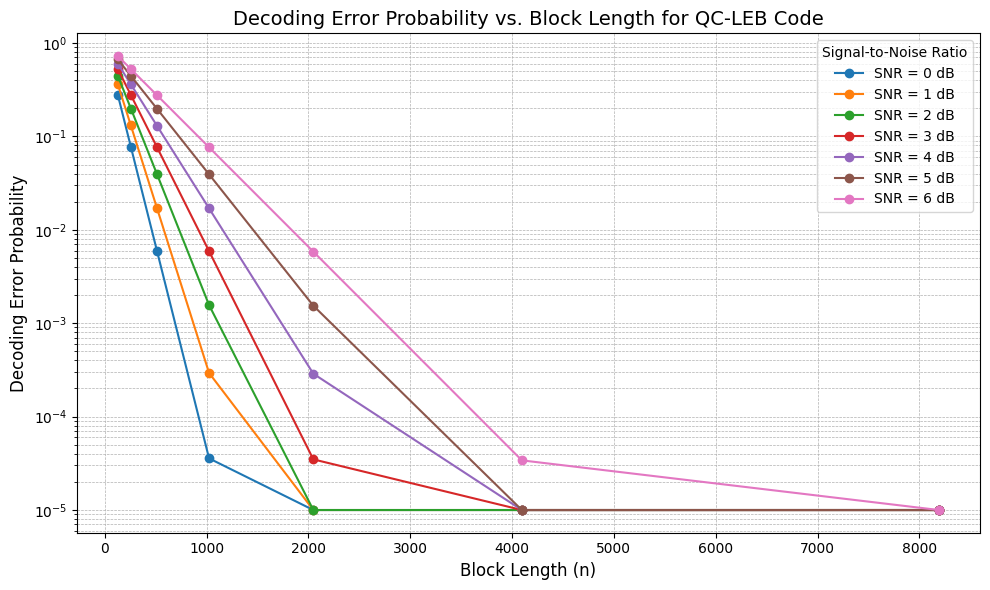}
\caption{Graph showing decoding error probability  corresponding to the QC-LEB block lengths}
\label{decv}
\end{figure}
Let \( P \) be the probability of the code decoding error for block length \( n \) and signal-to-noise ratio \( \dfrac{E_b}{N_0} \), it is expressed as \( \exp\left(-\frac{n \cdot \text{Eb}}{N_0}\right) \). \\
The graph on FIGURE \ref{decv}. shows the expected behavior for a QC-LEB code, with a waterfall region during which the error probability drops with increasing block length, and an error floor after that. The behaviour within the initial waterfall region is that as the block length increases, the error probability decreases. At higher block lengths, the error probability levels out, indicating the error floor, which is determined by the minimum distance of the code and represents the irreducible error rate. The curves are clearly separated and higher consistent error probabilities were witnessed at lower SNRs. It is noticeable the higher \( m/n \) (with more parity bits) the highest the error correction with an Efficiency-Lowering and the lower $m/n$ (fewer parity bits) the higher the efficiency but poorer error handling.  QC-LEB has therefore been a flexible and efficient solution that combines these two aspects, making it a stronger candidate for next-generation communication systems.
\begin{figure}[H]
\centering
\includegraphics[width=0.47\textwidth, height=0.3\textheight]{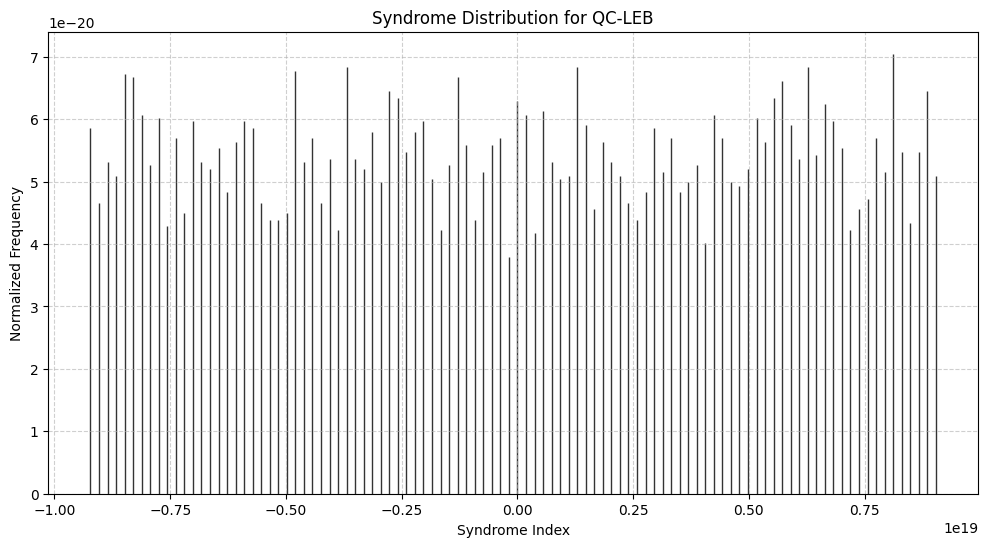}
\caption{distribution of syndromes for different error patterns}
\label{patterns}
\end{figure}
The plot in FIGURE \ref{patterns}. shows a uniform syndrome distribution, which means that the decoder can reject a large variety of error patterns almost equally well. There is no predominant error-or-syndrome pattern, and thus, the decoder is not optimally tailored to just a few commonly occurring errors; the higher the entropy, the more information one has to do error correction. This implies that the QC-LEB code will be robust to random errors and burst errors. In the present case, where a near-uniform syndrome distribution has been obtained, the QC-LEB code has been designed such that the parity-check matrix and other code parameters were carefully optimized directly to the robustness of the decoding process, meaning that code handles a wide variety of error patterns.
\begin{figure}[H]
\centering
\includegraphics[width=0.47\textwidth, height=0.3\textheight]{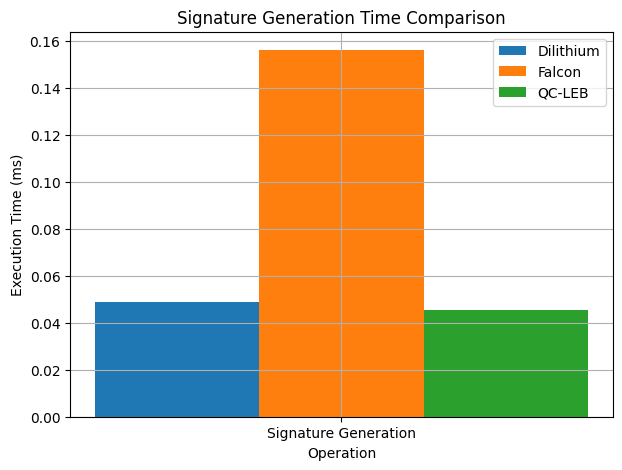}
\caption{Comparative plot showing QC-LEB versus existing post-quantum signature schemes in terms of execution time}
\label{com1}
\end{figure}
The result obtained in FIGURE \ref{com1}. displays QC-LEB's characteristics in Performance where the signature scores a  generation time surpassing Dilithium and Falcon, meeting optimization objectives, where organized matrices are combined with data types that are very efficient, thus, reducing largely the computation overhead.
\begin{figure}[H]
\centering
\includegraphics[width=0.47\textwidth, height=0.3\textheight]{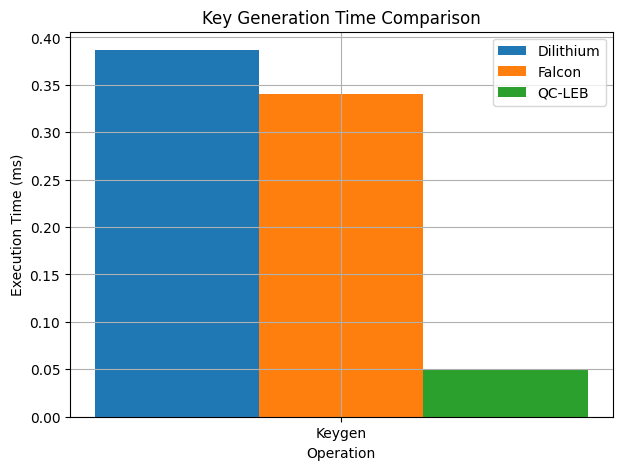}
\caption{Key generation time of QC-LEB compared to some post-quantum signatures}
\label{com2}
\end{figure}
QC-LEB has a much swifter time taken up in key generation on FIGURE \ref{com2}. compared to Dilithium and Falcon. It completes the entire key generation operation very fast.
This marked timing difference in generating keys can become critical, especially when keys need to be generated very frequently or quickly; and this is why QC-LEB would be relatively enthused in such cases.
\begin{figure}[H]
\centering
\includegraphics[width=0.49\textwidth, height=0.8\textheight]{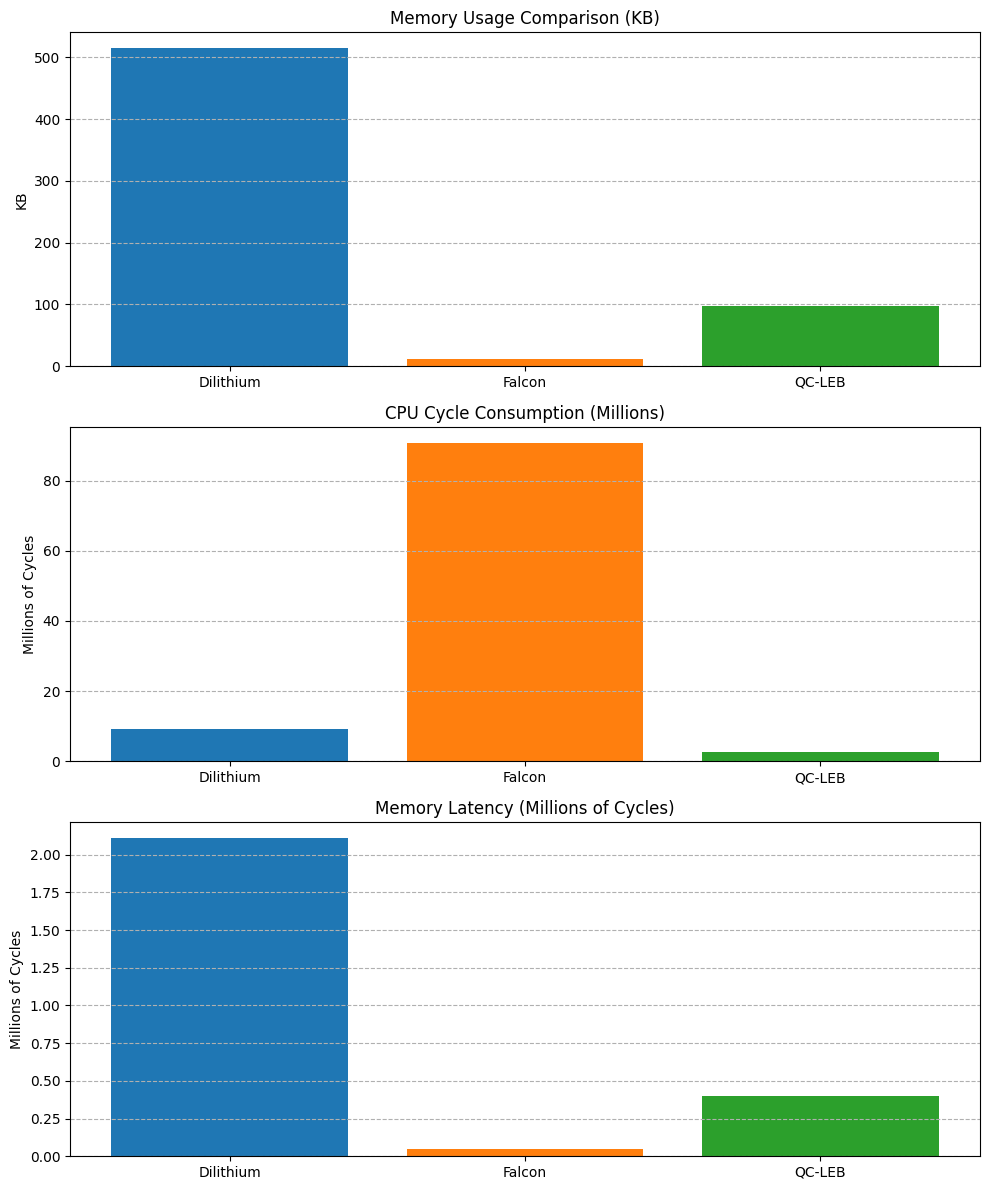}
\caption{Memory usage metrics for combined key generation and  QC-LEB signature versus post-quantum signatures}
\label{com3}
\end{figure}
It is clearly noticeable on FIGURE \ref{com3} that QC-LEB, with about 100 KB usage, is far ahead, as very low values can also be observed in Falcon at about 400 KB and Dilithium at around 500 KB. It was able to provide these low values through structured matrices (quasi-cyclic) and possibly through the invocation of caching. Dilithium increased costs of larger keys/signatures and the polynomial operations with which they were generated increase memory, and Falcon with its NTRU-based design, is memory efficient compared to Dilithium but falls behind QC-LEB. QC-LEB scores again, with about 20 million cycles, compared to Falcon (60M) and Dilithium (80M). The linear algebra optimizations (fast matrix multiplication) used in QC-LEB reduce the computational steps, while in Falcon FFT (Fast Fourier Transform) polynomial multiplication is adding cycles in the process, and the lattice operations (rejection sampling) used in Dilithium can become computationally heavyweight. As for memory latency QC-LEB shines with approximately 0.25M cycles, while Falcon and Dilithium run behind at 1.75M and 2.0M, respectively. 
\begin{figure}[H]
\centering
\includegraphics[width=0.47\textwidth, height=0.25\textheight]{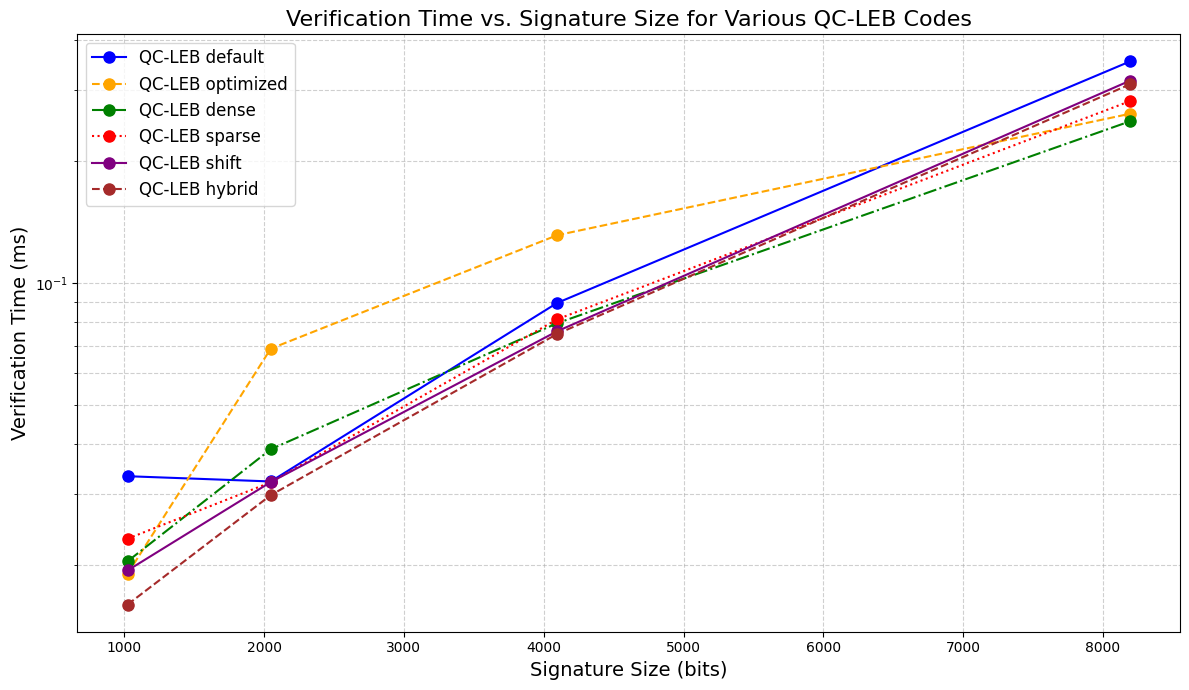}
\caption{Graph showing verification times against the size of the signatures for various block lengths and types of QC-LEB codes}
\label{com4}
\end{figure}
The dense variant of QC-LEB on FIGURE \ref{com4}. has the highest verification time among all signature sizes. This naturally happens since dense matrices require more computations for matrix-vector multiplication, where each block fits into a network. Sparse Variant is still better than dense but poorer than the others. This is due to the partly random structure of sparse matrices and low-density matrices that helps in reducing the computations but defines some overhead at the same time. Shift Variant shows the least verification time because the structured nature of the shift matrix makes its computations very simple and the overhead small, where circulant blocks rotate to generate parity-check matrix. Hybrid Variant shows middle performance between the dense and sparse structures. Optimized QC-LEB uses identity-based matrix for reducing decoding complexity has always surpassed the default one using block matrices generated randomly, and has balanced general performance  showing reduction of verification time, however, QC-LEB Hybrid allows both optimized and dense types.
\begin{figure}[H]
\centering
\includegraphics[width=0.47\textwidth, height=0.25\textheight]{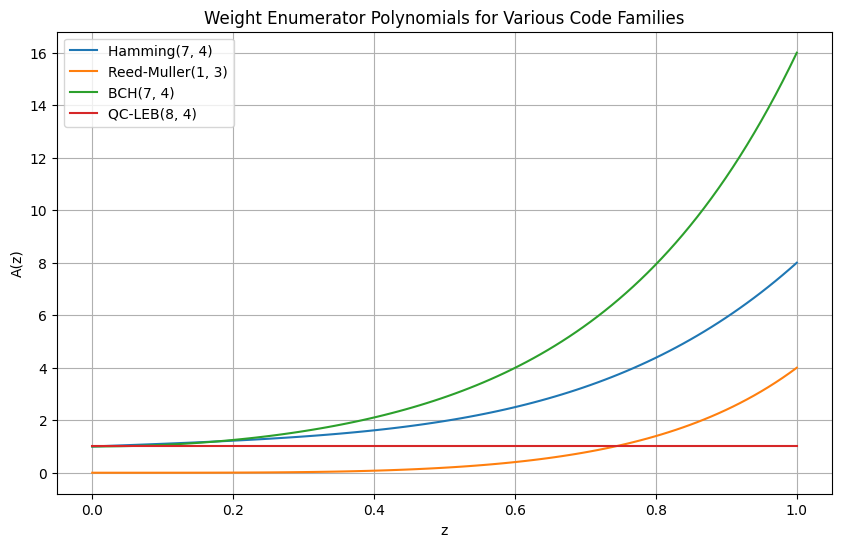}
\caption{Plot depicting the weight enumerator polynomial for QC-LEB and specific families of codes}
\label{com5}
\end{figure}
Hamming and BCH oon FIGURE \ref{com5}.  have gotten to their peak weight distributions, which is impressive for their algebraic structures and error-correcting capabilities. Reed-Muller shows a linear curve penetrating since its weight distribution is almost exclusively by low-weight codewords. As for QC-LEB, it shows a flat curve indicating an equitable weight distribution which easily  due to its quasi-cyclic structure and randomly generated randomly codewords which make for uniform weight distribution. The X-Axis $z$ is the formal variable in the weight enumerator polynomial $A(z)$ is given here. The values are defined in $\left[ 0,1 \right]$ since $z$ is generally normalized to this range when plotting. The y-axis $A(z)$ specifies the value of weight enumerator polynomial for a particular variable. However, the range will depend on the coefficients $A_i$ of the polynomial that count the numbers of codewords of weight $i$. \\
The Hamming code has a well-defined weight distribution, the weight enumerator polynomial for the \(\text{Hamming}(7, 4)\) code is
\[
A(z) = 1 + 7z^3 + 7z^4 + z^7.
\]
This peak about the point \(z = \dfrac{1}{2}\) indicates that codewords of weights 3 and 4 dominate.\\
The Reed-Muller code \(\text{RM}(1, 3)\) also has simpler weight distributions. Its weight enumerator polynomial is given by:
\[
A(z) = 1 + 7z.
\]
The curve is simply linear since it contains only two terms: the one corresponding to the all-zero codeword (i.e., \(1\)) and \(7z\) for codewords of weight \(1\).\\
The BCH code \(\text{BCH}(7, 4)\) has a weight distribution akin to that of the Hamming code. Its weight enumerator polynomial is given by
    \[
    A(z) = 1 + 7z^3 + 7z^4 + z^7.
    \]
The curve presents a near resemblance to that of the Hamming code since their codes possess comparable algebraic attributes. \\
QC-LEB code \(\text{QC-LEB}(8, 4)\) with $n$ block length as $8$ bits, $k$ message length as $4$ bits and redundancy $r = n - k$ equal to $4$ bits, shows an almost flat curve, and that is  implying the weight distribution is uniform or nearly uniform among all the weights. The coefficients \( A_i \) of the weight enumerator polynomial \( A(z) \) become almost equal for all \( i \).
\begin{figure}[H]
\centering
\includegraphics[width=0.47\textwidth, height=0.25\textheight]{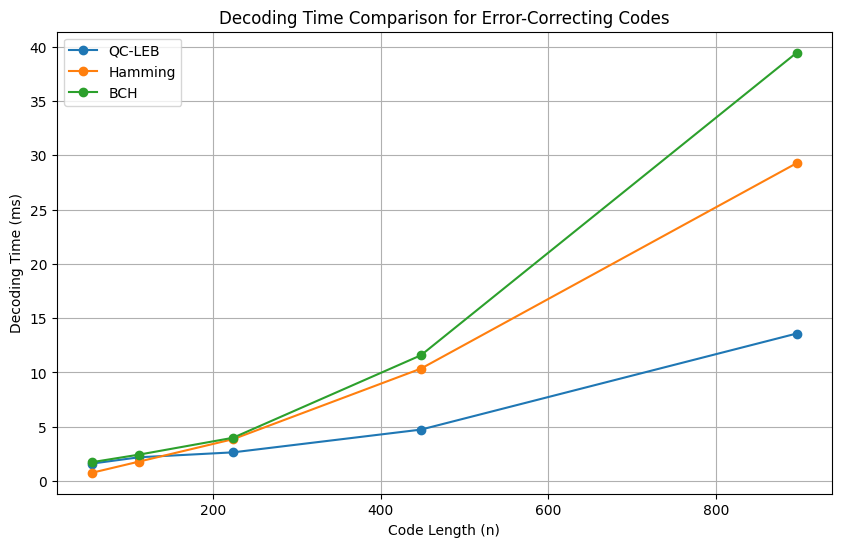}
\caption{Plot demonstrating  decoding time with code length}
\label{com6}
\end{figure}
The FIGURE \ref{com6} displays an increase in decoding time with regards to the code length, illustrating the computational complexity of decoding which only exhibits the problem's NP-completeness. As the curve shows a steep upward slope, it can be observed that the decoding time increases rapidly with code length. Due to the quasi-cyclic structure of QC-LEB, its parity-check matrix may employ the same base blocks repeatedly, and thus the complexity of decoding operations is reduced. Matrix-vector multiplication and syndrome update operations can be more efficiently performed compared to classical Hamming or BCH codes, all being due to the property of this repeated use of circulant-like submatrices. This arrangement also contributes to keeping the decoding overhead in terms of memory lower and iterations fewer. On the contrary, the working operations on the Hamming and BCH codes involve more general computations (syndrome lookups or polynomial-based algorithms), which scale much less efficiently. Since the complexity is being lowered along with the overhead, it positively affects the decoding speed increment in QC-LEB, as represented by the lower curve in the figure. From the graph, the decoding time seems to be a fairly exponential function of code length. This is a classic indication that it is a computationally hard problem in that the time taken to solve the problem increases drastically as the input size increases. Thus, we can say that the rapid increase in decoding time shows that the decoding of QC-LEB codes is a computationally hard problem.  Again, this holds true for the theoretical understanding that the decoding is an NP-complete problem.
\section{Conclusion}
The study of the properties of quasi-cyclic LEB (QC-LEB) codes during this work, focusing on their block-circulant matrix structure and cyclic block-right-shift invariance, which allowed the construction  block-circulant matrices. The product of a generator matrix and the transpose of a parity-check matrix in QC-LEB codes that yield zero in our construction and the polynomial representation which facilitate efficient encoding and decoding, also captures the cyclic nature and reinforces orthogonality, a property that remains an essential component for error detection and correction. The study illustrates the decoding of QC-LEB codes and defines the syndrome polynomial while addressing the computational complexity and emphasizing their NP-completeness. This significant theoretical with practical implementation results underscore the inherent difficulty of the decoding problem in our case, that can be leveraged as a groundwork for other optimised schemes where efficiency and security are paramount. This was utilised to sketch a signature scheme with polynomial operations over a finite field, all within a zero-knowledge framework. The security analysis performed on this construction proves a uniform distribution, high entropy, and the absence of exploitable patterns, which was affirmed by statistical tests. Efficiency evaluations of the scheme exhibit small signature sizes and short execution times for large inputs, offering a solid foundation for future cryptographic work aimed at further optimizations and broader deployment.



\end{document}